\def\bmtheta{{\bm{\theta}}}
\def\bmSigma{{\bm{\Sigma}}}
\def\bmmu{{\bm{\mu}}}
\newcommand{\R}{\mathbb{R}}
\newcommand{\transp}{{T}}
\renewcommand{\d}{\hbox{d}}
\newtheorem{df}{Definition}[section]
\newtheorem{theorem}[df]{Theorem}
\newtheorem{corollary}[df]{Corollary}
\newtheorem{proposition}[df]{Proposition}
\newtheorem{remark}[df]{Remark}
\newtheorem{example}{Example}[section]
\begin{document}

\verso{HJB Equation for Constrained Optimal Allocation Problem}
\recto{S. Kilianov\'a and D. \v{S}ev\v{c}ovi\v{c}}

\title{Transformation Method for Solving Hamilton-Jacobi-Bellman Equation for Constrained Dynamic Stochastic Optimal Allocation Problem}

\author{S.~Kilianov\'a}

\cauthormark
\author{D. \v{S}ev\v{c}ovi\v{c}}

\address{
Dept.\ Applied Mathematics \& Statistics, Faculty of Mathematics, Physics and Informatics, Comenius University, 842 48  Bratislava, Slovakia \email[1]{kilianova@fmph.uniba.sk, sevcovic@fmph.uniba.sk}}

\pages{1}{22}

\begin{abstract}
In this paper we propose and analyze a method based on the Riccati transformation for solving the evolutionary Hamilton-Jacobi-Bellman equation arising from the  stochastic dynamic optimal allocation problem. We show how the fully nonlinear Hamilton-Jacobi-Bellman equation can be transformed into a quasi-linear parabolic equation whose diffusion function is obtained as the value function of certain parametric convex optimization problem. Although the diffusion function need not be sufficiently smooth, we are able to prove existence, uniqueness and derive useful bounds of classical H\"older smooth solutions. We furthermore construct a fully implicit iterative numerical scheme based on finite volume approximation of the governing equation. A numerical solution is compared to a semi-explicit traveling wave solution by means of the convergence ratio of the method. We compute optimal strategies for a portfolio investment problem motivated by the German DAX 30 Index as an example of application of the method.
\end{abstract}

\keywords[2000 \textit{Mathematics subject classification}]{Primary: 35K55, Secondary:  
 34E05 70H20 91B70 90C15 91B16}

\keywords[\textit{Keywords and phrases}]{
Hamilton--Jacobi--Bellman equation, Riccati transformation, quasi-linear parabolic equation, finite volume approximation scheme, traveling wave solution}

\maketitle

\section{Introduction}

The purpose of this paper is to propose and analyze a method based on the Riccati transformation for solving a time dependent Hamilton-Jacobi-Bellman equation arising from a stochastic dynamic optimal allocation problem on a finite time horizon, in which our aim is to maximize the expected value of the terminal utility subject to constraints on portfolio composition. 

Investment problems with state constraints were considered and analyzed by Zariphopoulou \cite{Zariphopoulou}, where the purpose was to maximize the total expected discounted utility of consumption for the optimal portfolio investment consisting of a risky and a risk-free asset, over an infinite and finite time horizon. It was shown that the value function of the underlying stochastic control problem is the unique smooth solution to the corresponding HJB equation and the optimal consumption and portfolio are presented in a feedback form. She furthermore showed that the value function is a constrained viscosity solution of the associated HJB equation. Classical methods for solving HJB equations are discussed by Benton in \cite{Benton1977}. In \cite{MusielaZariphopoulou}, Musiela and Zariphopoulou applied the power-like transformation in order to linearize the non-linear PDE for the value function in the case of an exponential utility function. In the seminal paper \cite{Karatzas} Karatzas {\it et al.} investigated a similar problem of consumption-investment optimization where the problem is to maximize total expected discounted utility of consumption over time horizon $[0,T]$. For a class of utility functions, they derived explicit solutions to the HJB equation. However, in our case the aim is to maximize the expected value of the terminal utility from portfolio for a general utility function under constraints imposed on the control function and for the case of nontrivial defined contributions to the portfolio. As consequence, we have to solve the dynamic HJB equation and, in general, explicit solutions to such nonlinear PDE are no longer available.

Regarding numerical approaches for solving HJB equations associated with portfolio optimization, we can refer to finite difference methods for approximating its viscosity solution developed and analyzed by Tourin and Zariphopoulou \cite{TourinZariphopoulou}, Crandall, Ishii and Lions \cite{Crandall1992}, Nayak and Papanicolaou \cite{nayak2008}. In \cite{muthuraman2004}, Muthamaran and Sunil solved a multi-dimensional portfolio optimization problem with transaction costs. They used finite element method and iterative procedure that converts a free-boundary problem into a sequence of fixed boundary problems. In \cite{Peyrl}, Peyrl \emph{et al.} applied a successive approximation algorithm for solving the corresponding HJB equation. The fixed point-policy iteration scheme for solving discretized HJB equations is discussed in Huang \emph{et al.} \cite{Huang2010}. In \cite{Reisinger}, Witte and Reisinger presented a penalty approach for the numerical solution of discrete continuously controlled HJB equations.

In our approach we follow a different approach. Rather than solving the fully nonlinear HJB equation directly, we first transform it into a quasi-linear parabolic equation by means of the Riccati transformation. We prove existence and uniqueness of a solution to the transformed quasi-linear parabolic equation. Moreover, we derive useful bounds on the solution. These bounds can be interpreted as estimates for the coefficient of risk aversion. A special attention is put on a solution of an auxiliary parametric quadratic programming problem. It is shown that the derivative of the value function of such a convex program plays the role of a diffusion coefficient of the quasi-linear equation. Although the diffusion function need not be sufficiently smooth, we are able to prove existence, uniqueness and derive useful bounds of classical H\"older smooth solutions.

The resulting equation can be solved numerically by an iterative method based on finite volume approximation. There is an analogy to a solution of fully nonlinear generalizations of the Black-Scholes equation for pricing derivative securities (cf. \v{S}ev\v{c}ovi\v{c}, Stehl\'ikov\'a and Mikula \cite{SSM}) and the fully nonlinear HJB equation investigated in this paper. In \cite{JS} Janda\v{c}ka and \v{S}ev\v{c}ovi\v{c} suggested a numerical method for solving a fully nonlinear generalization of the Black--Scholes equation by means of its transformation to the so-called Gamma equation stated for the second derivative of the option price. In fact, the Riccati transformation is the logarithmic derivative of the derivative of the value function. Here we apply the Riccati transformation proposed and analyzed in a series of papers by Ishimura \emph{et al.} \cite{AI,IM,IsshiNaka}. In the context of a class of HJB equations with range constraints, such a transformation has been analyzed recently by Ishimura and \v{S}ev\v{c}ovi\v{c} in \cite{IshSev} where also a traveling wave solution to the HJB equation has been constructed. Concerning numerical methods for solving the transformed quasi-linear parabolic PDE there are recent papers by Ishimura, Koleva and Vulkov \cite{IKV1,IKV2,Kole,KoleVulkov} where they considered a simplified problem without inequality constraints on the optimal control function.

The paper is organized as follows. In Section 2 we formulate the problem  of our interest and the motivation behind it. Section 3 is devoted to analysis of the Riccati transformation of the HJB equation into a quasi-linear parabolic equation. The transformed function can be interpreted in terms of the coefficient of relative risk aversion of an investor. In Section 4 we analyze a class of parametric quadratic optimization problems. The goal of this section is to show that the value function is a sufficiently smooth and increasing function. Lipschitz continuity of the derivative of the value function is a crucial requirement for the proof of existence and uniqueness of a classical solution to the transformed quasi-linear parabolic equation presented in Section 5.  We also derive useful bounds of a solution to the Cauchy problem for the corresponding quasi-linear parabolic equation. Using these bounds and smoothness properties of the value function of the auxiliary parametric quadratic optimization problem, we prove existence of a classical H\"older smooth solution.  A special semi-explicit solution having the form of a traveling wave is analyzed in Section 6. Such a special solution is then utilized as a benchmark solution in Section 7, where we construct an iterative fully implicit numerical approximation scheme for solving a quasi-linear parabolic  equation.  Section 8 is devoted to application of the method to construction of an optimal response strategy for the German DAX 30 Index.

\section{Problem statement}
\label{sec:motivation}

Our motivation arises from a dynamic stochastic optimization problem in which the purpose is to maximize the conditional expected value of the terminal utility of a portfolio:
\begin{equation}
\max_{\bmtheta|_{[0,T)}} \mathbb{E}
\left[U(X_T^\bmtheta)\, \big| \, X_0^\bmtheta=x_0 \right],
\label{maxproblem}
\end{equation}
where $\{X_t^{\bmtheta}\}$ is the It\=o's stochastic process on the finite time horizon $[0,T]$, $U: \mathbb{R} \to \mathbb{R}$ is a given terminal utility function and $x_0$ a given initial state condition of  $\{X_t^{\bmtheta}\}$ at $t=0$. The function $\bmtheta:   \mathbb{R} \times [0,T) \to \R^n$ mapping $(x,t) \mapsto \bmtheta(x,t)$ represents an unknown control function governing the underlying stochastic process $\{X_t^\bmtheta\}_{t\ge0}$. Here $ \bmtheta|_{[t,T)}$ for $0\le t<T$ denotes the restriction of the control function $\bmtheta$ to the time interval $[t,T)$. We assume that $X_t^\bmtheta$ is driven by the stochastic differential equation
\begin{equation}
\d X_t^\bmtheta = \left( \varepsilon e^{-X_t} + r + \mu(\bmtheta)
-\frac12 \sigma(\bmtheta)^2 \right) \d t + \sigma(\bmtheta)
\d W_t, \label{processX}
\end{equation}
where $W_t$ denotes the standard Brownian motion and the functions $\mu(\bmtheta)$ and $\sigma(\bmtheta)$ are the drift and volatility functions depending on the control function $\bmtheta$. The parameter $\varepsilon \in \mathbb{R}$ represents a constant inflow rate of property to the system whereas $r\ge0$ is the interest rate. Many European pension systems use $\varepsilon > 0$, representing regular contribution rate to the saver's pension account as a prescribed percentage of their salary. For example, $\varepsilon=0.06-0.09$ in Slovakia, $\varepsilon=0.14$ in Bulgaria, $\varepsilon=0.02-0.05$ in Sweden (c.f. \cite{MS,KoleVulkov}).

Throughout the paper we shall assume that the control parameter $\bmtheta\in \mathcal{S}^n$ belongs to the compact simplex
\begin{equation}
\mathcal{S}^n = \{\bmtheta \in \mathbb{R}^n\  |\  \bmtheta \ge \mathbf{0}, \mathbf{1}^\transp \bmtheta = 1\} \subset \mathbb{R}^n ,
\label{admissible_set_t}
\end{equation} 
where $\mathbf{1} = (1,\cdots,1)^\transp \in \mathbb{R}^n$. It should be noted that the process $\{X^\bmtheta_t\}$ is a logarithmic transformation of a stochastic process $\{Y_t^{\tilde\bmtheta} \}_{t\ge0}$ driven by the SDE:
\begin{equation}
\d Y_t^{\tilde\bmtheta} = (\varepsilon + (r +\mu({\tilde\bmtheta})) Y_t^{\tilde\bmtheta})
\d t + \sigma(\tilde\bmtheta) Y_t^{\tilde\bmtheta} \d W_t, \label{processYeps}
\end{equation}
where $\tilde\bmtheta(y,t) = \bmtheta(x,t)$ with $x=\ln y$.

It is known from the theory of stochastic dynamic programming that
the so-called value function
\begin{equation}
V(x,t):= \sup_{  \bmtheta|_{[t,T)}} 
\mathbb{E}\left[U(X_T^\bmtheta) | X_t^\bmtheta=x \right]
\end{equation}
subject to the terminal condition $V(x,T):=U(x)$ can be used for
solving the stochastic dynamic optimization problem (\ref{maxproblem}) (cf. Bertsekas
\cite{Bertsekas}, Fleming and Soner \cite{Fleming2005} or Bardi
and Dolcetta \cite{Bardi}).  If the process $X_t^{\bmtheta}$ is driven by (\ref{processX}), then the value function $V=V(x,t)$ satisfies the Hamilton-Jacobi-Bellman (HJB) equation 
\begin{equation}
\partial_t V + \max_{ \bmtheta \in \mathcal{S}^n} \left\{
\left(\varepsilon e^{-x} + r + \mu(\bmtheta) -
\frac{1}{2}\sigma(\bmtheta)^2\right)\partial_x V + \frac{1}{2}
\sigma(\bmtheta)^2 \partial_x^2 V \right\} = 0\,,
\label{eq_HJB}
\end{equation}
for all $x\in\R,\ t\in[0,T)$ subject to the terminal condition $V(x,T):=U(x)$ (see e.g. Macov\'a and {\v S}ev{\v c}ovi{\v c} \cite{MS} or Ishimura and \v{S}ev\v{c}ovi\v{c} \cite{IshSev}).

As a typical example leading to the stochastic dynamic optimization problem (\ref{maxproblem}) in which the underlying stochastic process satisfies SDE (\ref{processX}) one can consider a problem of dynamic portfolio optimization in which the assets are labeled as $i=1,\cdots,n,$ and associated with price processes $\{Y_t^i\}_{t \ge 0}$, each of them following a geometric Brownian motion 
\[
\frac{\d Y_t^i}{Y_t^i} = \mu_i \d t + \sum_{j=1}^n \bar\sigma_{ij} \d W_t^j
\]
(cf. Merton \cite{Merton1, Merton2}, Browne \cite{BROWNE2000}, Bielecki and Pliska \cite{BPLIS2005} or Songzhe \cite{SON2006}). The value of a portfolio with weights $\tilde\bmtheta=\tilde\bmtheta(y,t)$ is denoted by $Y^{\tilde\bmtheta}_t$. It can be shown that $\{Y^{\tilde\bmtheta}_t\}_{t\ge0}$ satisfies (\ref{processYeps}). The assumption  $\bmtheta\in \mathcal{S}^n$ corresponds to the situation in which borrowing of assets is not allowed ($\theta_i\ge0$) and $\sum_{i=1}^n\theta_i = 1$. We have $\mu(\bmtheta) = \bmmu^\transp \bmtheta$ and $\sigma(\bmtheta)^2 = \bmtheta^\transp \bmSigma \bmtheta$ with $\bmmu=(\mu_1,\cdots,\mu_n)^\transp$ and  $\bmSigma = \bar{\bmSigma} \bar{\bmSigma}^\transp$ where $\bar\bmSigma=(\bar\sigma_{ij})$. The terminal function $U$ represents the predetermined terminal utility function of the investor.

\begin{remark} \label{rem:merton}
In the case of zero inflow $\varepsilon=0$, assumption (\ref{processYeps}) made on the stochastic process $\{Y_t^{\tilde\bmtheta} \}_{t\ge0}$ is related to the well-known Merton's model  for optimal consumption and portfolio selection (cf. Merton \cite{Merton1, Merton2}). However, for Merton's model, one has to consider a larger set of constraints for control function $\bmtheta$. Namely, the simplex $\mathcal{S}^n$ has to be replaced by a larger set $\mathcal{S}^n_o = \{\bmtheta \in \mathbb{R}^n\  |\  \bmtheta \ge \mathbf{0}, \mathbf{1}^\transp \bmtheta \le  1\} \subset \mathbb{R}^n$. It is worth to note that all results concerning $C^{1,1}$ smoothness of the value function $\alpha$ (see Theorem~\ref{smootheness}) as well as those regarding existence and uniqueness of classical solutions (see Theorem~\ref{existence}) and numerical discretization scheme remain true when $\mathcal{S}^n$ is replaced by $\mathcal{S}^n_o$.
\end{remark}

\section{The Riccati transformation of the HJB equation to a quasi-linear parabolic equation}
\label{sec:HJB}

Following the methodology of the Riccati transformation first proposed by Abe and Ishimura in \cite{AI} and later studied by Ishimura  \emph{et al.} \cite{IM,IsshiNaka}, Xia \cite{Xia}, or  Macov\'a and \v{S}ev\v{c}ovi\v{c} \cite{MS} for problems without inequality constraints, and further analyzed by Ishimura and \v{S}ev\v{c}ovi\v{c} \cite{IshSev}, we
introduce the following transformation:
\begin{equation}
\varphi(x,t) = 1 - \frac{\partial_x^2 V(x,t)}{\partial_x V(x,t)}.
\label{eq_varphi}
\end{equation}

\begin{remark}\label{rem:ara}
The function $a(x,t)\equiv\varphi(x,t) -1$ can be viewed as the
coefficient of absolute risk aversion for the value function
$V(x,t)$, representing the intermediate utility function of an 
investor at a time $t\in[0,T]$ (cf. Pratt \cite{P}). In the original
variable $y$, denoting $\widetilde V(y,t) = V(\ln y,t)$, we can
deduce that the function $\widetilde a(y,t) \equiv\varphi(\ln
y,t)$ is the coefficient of relative risk aversion of the
intermediate utility function  $\widetilde V(y,t)$, which is
defined as the ratio: $\widetilde a(y,t) = - y\partial_y^2
\widetilde V(y,t)/ \partial_y \widetilde V(y,t)$. 
\end{remark}

\begin{remark}\label{rem:transactioncosts}
It is worth to note that the pension saving's model model based on the SDE (\ref{processX}) takes into account neither transaction costs nor consumption. It follows from recent papers by Dai {\emph et al.} \cite{DaiSIAM, DaiJDE} that a model incorporating these effects leads to a HJB equation in two spatial dimensions. In such a case, transformation based on a simple one dimensional Riccati transformation (\ref{eq_varphi}) is not possible. 
\end{remark}

Suppose for a moment that $\varphi(x,t) >0$ for all $x\in\R$ and
$t\in[0,T]$. This assumption is clearly satisfied for $t=T$ if we
consider a function $U(x)$ which is an increasing and
concave function in the $x$ variable. We discuss more on this
assumption in Section \ref{sec:existence}. Now, problem
(\ref{eq_HJB}) can be rewritten as follows:
\begin{equation}
0 = \partial_t V + \left(\varepsilon e^{-x} + r - \alpha(\varphi)\right)\partial_x V, \qquad V(x,T):=U(x),
\label{eq_HJBtransf}
\end{equation}
where $\alpha(\varphi)$ is the value function of the following
parametric optimization problem:
\begin{equation}
\alpha(\varphi) = \min_{ \bmtheta \in \mathcal{S}^n} \{
-\mu(\bmtheta) +  \frac{\varphi}{2}\sigma(\bmtheta)^2 \}\,.
\label{eq_alpha_def}
\end{equation}

If the variance function
$\bmtheta\mapsto\sigma(\bmtheta)^2$ is strictly
convex and $\bmtheta\mapsto\mu(\bmtheta)$ linear (as discussed in Section \ref{sec:motivation}),
problem (\ref{eq_alpha_def}) belongs to a class of parametric
convex optimization problems (cf. Bank \emph{et al.}
\cite{Bank1983}).

\begin{theorem}
Suppose that the value function $V$ satisfies (\ref{eq_HJBtransf})
and the function $\varphi$ is defined as in (\ref{eq_varphi}).
Then $\varphi$ is a solution to the Cauchy problem for the
quasi-linear parabolic equation:
\begin{eqnarray}
&& \partial_t \varphi
+
\partial_{x}^2 \alpha(\varphi)
+
\partial_x
[ (\varepsilon  e^{-x} + r )\varphi + (1-\varphi) \alpha(\varphi)  ]
=0, \quad x\in\R, t\in[0,T), \label{eq_PDEphi_1} \\
&& \varphi(x,T) = 1 - U^{\prime\prime}(x)/U^\prime(x), \quad x\in\R. \nonumber
\end{eqnarray}
\end{theorem}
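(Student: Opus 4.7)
The plan is to derive the equation for $\varphi$ by differentiating the transformed HJB equation (\ref{eq_HJBtransf}) twice in $x$ (once in a conservative form, once after dividing by $\partial_xV$) and then exploiting the identity that the Riccati transformation builds in. Throughout I assume $V$ is smooth enough and $\partial_x V>0$, so that $\varphi$ is well-defined and $\alpha(\varphi)$ may be differentiated by the chain rule. Setting $W=\partial_x V$, definition (\ref{eq_varphi}) gives the basic identity $\partial_x W/W = 1-\varphi$, equivalently $\partial_x\ln W = 1-\varphi$. This is the bridge that will convert $x$-derivatives of $\ln W$ into expressions in $\varphi$.

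First, I differentiate (\ref{eq_HJBtransf}) in $x$ to obtain a conservation law for $W$:
\begin{equation*}
\partial_t W + \partial_x\!\bigl[(\varepsilon e^{-x}+r-\alpha(\varphi))\,W\bigr]=0.
\end{equation*}
Dividing by $W$ and expanding the divergence via the product rule together with the identity $\partial_x W/W=1-\varphi$ turns this into a pointwise equation for $\ln W$:
\begin{equation*}
\partial_t \ln W = -\bigl(\varepsilon e^{-x}+r-\alpha(\varphi)\bigr)(1-\varphi) + \varepsilon e^{-x} + \partial_x\alpha(\varphi).
\end{equation*}

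Next, I differentiate this identity in $x$ and swap the order of mixed derivatives on the left, using $\partial_x\ln W = 1-\varphi$ once more, so that $\partial_x\partial_t\ln W = -\partial_t\varphi$. Collecting terms on the right-hand side, the block $(\varepsilon e^{-x}+r-\alpha(\varphi))(1-\varphi)-\varepsilon e^{-x}$ simplifies algebraically to $r - (\varepsilon e^{-x}+r)\varphi - \alpha(\varphi)(1-\varphi)$, and $\partial_x r = 0$. This produces precisely
\begin{equation*}
\partial_t\varphi + \partial_x^2\alpha(\varphi) + \partial_x\!\bigl[(\varepsilon e^{-x}+r)\varphi + (1-\varphi)\alpha(\varphi)\bigr]=0,
\end{equation*}
which is (\ref{eq_PDEphi_1}). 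The terminal condition $\varphi(x,T)=1-U''(x)/U'(x)$ is immediate from $V(x,T)=U(x)$ and definition (\ref{eq_varphi}).

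The main obstacle is not algebraic but regularity-theoretic. The derivation requires $V\in C^{3}$ in $x$, the strict sign condition $\partial_x V>0$ in order to divide by $W$, and differentiability of $\alpha$ along $\varphi$. Section 4 only guarantees $\alpha\in C^{1,1}$, so $\partial_x\alpha(\varphi)=\alpha'(\varphi)\partial_x\varphi$ and $\partial_x^2\alpha(\varphi)$ must be read in a generalized (a.e.\ or distributional) sense; making this precise is the task of the existence/uniqueness theorem in Section~5. At the formal level treated here, however, the argument is a clean two-step differentiation coupled with the Riccati identity.
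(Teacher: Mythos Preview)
Your proof is correct and follows essentially the same route as the paper. Both arguments differentiate the transformed HJB equation (\ref{eq_HJBtransf}) and use the Riccati identity $\partial_x^2 V=(1-\varphi)\partial_x V$; the paper computes $\partial_t\varphi$ directly from $\partial_x\partial_t V$ and $\partial_x^2\partial_t V$ expressed through an auxiliary $g=\alpha(\varphi)-\varepsilon e^{-x}-r$, while you package the same calculation via $\ln W$ with $W=\partial_x V$, which is a mild and equally valid reorganization.
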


\begin{proof}
The statement can be easily shown  by differentiating
(\ref{eq_varphi}) with respect to $t$ and calculating derivatives
$\partial_t V$, $\partial_{x}\partial_t V$, $\partial_{x}^2
\partial_{t}V$ from (\ref{eq_HJBtransf}). Indeed, as $\partial^2_x
V = (1-\varphi)\partial_x V$, we have
\[
\partial_t\varphi =
 - \frac{\partial_{x}^2 \partial_{t}V}{\partial_x V}
+
\frac{\partial_{x}^2 V \partial_x \partial_{t} V}{(\partial_x V)^2}
=
- \frac{\partial_{x}^2 \partial_{t}V}{\partial_x V}
+ (1-\varphi) \frac{\partial_x \partial_{t} V}{\partial_x V}.
\]
Let us denote
\begin{equation}
g(x,t) = \alpha(\varphi(x,t)) - \varepsilon e^{-x} -r.
\label{Ag}
\end{equation}
Then $\partial_t V = g \partial_x V$ and therefore
\begin{eqnarray*}
\partial_x\partial_t V &=&  \partial_x g \partial_x V + g \partial^2_x V
= [\partial_x g + g (1-\varphi) ] \partial_x V,
\\
\partial^2_x\partial_t V &=&  [\partial^2_x g + \partial_x(g (1-\varphi) )
+ (\partial_x g + g (1-\varphi)) (1-\varphi) ] \partial_x V .
\end{eqnarray*}
Hence
\begin{equation}
\partial_t \varphi = - \partial_x\left( \partial_x g + g
(1-\varphi)\right)\,, \label{Bg}
\end{equation}
and $\partial_t \varphi = -\partial_x \left[
\partial_x\alpha(\varphi) + (\varepsilon  e^{-x} +r)\varphi +
\alpha(\varphi) (1-\varphi) \right]$, as claimed.

Finally, we notice that $\partial_x^2 \alpha(\varphi) = \partial_x(\alpha^{\prime}(\varphi) \partial_x \varphi)$. Moreover, if $\alpha$ is strictly increasing then \eqref{eq_PDEphi_1} indeed is a quasi-linear parabolic PDE with terminal condition at $t=T$  (see Ladyzhenskaya {\it et al.} \cite[Chapter 1, (2.4)]{LSU}). 
\end{proof}

Conversely, one can construct a solution $V(x,t)$ to the HJB
equation (\ref{eq_HJBtransf}) using a solution $\varphi$
satisfying equation (\ref{eq_PDEphi_1}). Indeed, suppose that the
function $\varphi$ satisfies (\ref{eq_PDEphi_1}). We can define a
function $V=V(x,t)$ as the unique solution to the first order
linear PDE satisfying the terminal condition:
\begin{equation}
\partial_t V - g \partial_x V =0\,, \quad V(x,T)=U(x)\,,\quad x\in\R,\ t\in[0,T),
\label{transform}
\end{equation}
where the function $g=g(x,t)$ is given by (\ref{Ag}). Let us
introduce $\psi=\psi(x,t)$ as follows:
\[
\psi=1-\frac{\partial_x^2 V}{\partial_x  V}.
\]
Then, following derivation of (\ref{Bg}) we end up with an
equation for the function $\psi$:
\[
\partial_t \psi = - \partial_x\left( \partial_x g + g (1-\psi)\right).
\]
Hence the difference $h \equiv \psi-\varphi$ satisfies a linear
PDE: $\partial_t h =  \partial_x g (h)$. Since
$\varphi(x,T)\equiv \psi(x,T)$ we deduce $\varphi(x,t)=\psi(x,t)$
for all $x\in\R$ and $t\in[0,T]$. But it means that $V$ fulfills
the fully nonlinear equation:
\begin{equation}
\partial_t V - \left[ \alpha(1- \partial_x^2 V/\partial_x V ) - \varepsilon e^{-x} -r \right] \partial_x V =0\,, \quad V(x,T)=U(x).
\label{hjb-nonlin}
\end{equation}
In other words, $V=V(x,t)$ satisfies HJB equation
(\ref{eq_HJBtransf}). Consequently, it is a solution to HJB
equation (\ref{eq_HJB}). Moreover, equation (\ref{hjb-nonlin}) is
a fully nonlinear parabolic equation which is monototone in its
principal part $\partial_x^2 V$. This way one can deduce that the
solution $V$ to (\ref{hjb-nonlin}) is unique. In summary, we have
shown that we can replace solving HJB equation (\ref{eq_HJB}) by
solving the auxiliary quasi-linear equation (\ref{eq_PDEphi_1}).

\begin{proposition}\label{equivalence}
Let $\varphi(x,t)$ be a solution to the Cauchy problem
(\ref{eq_PDEphi_1}). Then the function $V(x,t)$ given by
(\ref{transform}) is a solution to HJB equation (\ref{eq_HJB}).
Moreover, $\varphi=1- \partial_x^2 V/\partial_x  V$.
\end{proposition}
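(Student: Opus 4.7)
The plan is to formalize the calculation already sketched in the paragraph immediately preceding the proposition. Starting from the given $\varphi$, I would set $g(x,t) = \alpha(\varphi(x,t)) - \varepsilon e^{-x} - r$ and construct $V$ as the unique solution of the first order linear Cauchy problem (\ref{transform}) by the method of characteristics; since $U$ is strictly increasing and (\ref{transform}) is a pure transport equation without zero order terms, the characteristic representation gives $\partial_x V > 0$ on $\R \times [0,T]$, so the auxiliary quantity $\psi(x,t) := 1 - \partial_x^2 V / \partial_x V$ is well defined.

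Next, I would redo the calculation that led to (\ref{Bg}) in the proof of the preceding theorem, but now starting from the identity $\partial_t V = g\,\partial_x V$ and differentiating twice in $x$, to obtain the evolution equation $\partial_t \psi = -\partial_x\bigl(\partial_x g + g(1-\psi)\bigr)$. Subtracting the equation (\ref{eq_PDEphi_1}) satisfied by $\varphi$, the difference $h := \psi - \varphi$ solves the scalar linear transport equation $\partial_t h = \partial_x(g h)$ with zero terminal data, because both $\psi(\cdot,T)$ and $\varphi(\cdot,T)$ reduce to $1 - U''/U'$. A characteristics argument for this linear problem then forces $h \equiv 0$, so $\psi \equiv \varphi$.

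Once $\psi = \varphi$ is established, substituting back into the defining relation $\partial_t V = g\,\partial_x V$ rewrites it as $\partial_t V = \bigl[\alpha(1 - \partial_x^2 V/\partial_x V) - \varepsilon e^{-x} - r\bigr]\partial_x V$, which is exactly (\ref{hjb-nonlin}), i.e.\ (\ref{eq_HJBtransf}). Unfolding the definition (\ref{eq_alpha_def}) of $\alpha$ and using $\partial_x V > 0$ to turn the inner minimization into the maximization over $\mathcal{S}^n$ recovers (\ref{eq_HJB}); the second claim $\varphi = 1 - \partial_x^2 V/\partial_x V$ is just $\psi = \varphi$ restated.

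The main technical obstacle will be regularity: the derivations of the equations for $\psi$ and for $h$ involve $\partial_x g$ and $\partial_x^2 g$, and $g$ inherits from the composition $\alpha \circ \varphi$ only the smoothness of $\alpha$, which by the analysis of Section 4 is merely $C^{1,1}$. To make all of the above pointwise-classical, I would place $\varphi$ in the H\"older-classical solution class established in Section 5 (so that $\partial_x \varphi$ is Lipschitz and the characteristic formula for (\ref{transform}) yields $V \in C^{3,1}$ away from $t=T$), or else interpret the intermediate differentiations in a weak sense. A secondary technical point is verifying $\partial_x V > 0$ throughout, which both makes the Riccati transformation well defined and aligns the sign of the $\min$ in $\alpha$ with the $\max$ in the original HJB equation.
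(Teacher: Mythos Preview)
Your proposal is correct and follows essentially the same route as the paper: define $g$, build $V$ from the transport equation (\ref{transform}), introduce $\psi = 1 - \partial_x^2 V/\partial_x V$, derive the same equation (\ref{Bg}) for $\psi$, subtract to get the linear transport equation $\partial_t h = \partial_x(gh)$ for $h=\psi-\varphi$ with zero terminal data, and conclude $\psi\equiv\varphi$. The additional points you raise---positivity of $\partial_x V$ via characteristics, the $C^{1,1}$ regularity bottleneck for $\alpha$, and the sign flip turning the $\min$ in (\ref{eq_alpha_def}) into the $\max$ in (\ref{eq_HJB})---are genuine technical refinements that the paper's argument leaves implicit, but they do not alter the strategy.
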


\begin{remark}
The advantage of transforming (\ref{eq_HJB}) to
(\ref{eq_HJBtransf})--(\ref{eq_alpha_def}) is that we can define
and compute the function $\alpha(\varphi)$ in advance as a result
of the underlying parametric optimization problem (either
analytically or numerically). This can be then plugged into the
quasi-linear equation (\ref{eq_PDEphi_1}) which can be solved for
$\varphi$, instead of solving the original fully nonlinear HJB
equation (\ref{eq_HJBtransf}) as well as (\ref{eq_HJB}). In this
way we do not calculate the value function $V$ itself. On the
other hand, it is only the optimal feedback strategy $\bmtheta$
which is of investor's interest and therefore $V$ is not
important, in fact. The optimal strategy $\bmtheta=\bmtheta (x,t)$ can be computed
as the unique optimal solution to the quadratic optimization problem
(\ref{eq_alpha_def}) for the parameter values
$\varphi=\varphi(x,t)$.
\end{remark}


\section{A parametric quadratic programming problem}
\label{sec:multiportf}

In the case of the example of a portfolio consisting of $n$ assets, we denote $\bmmu$ the
vector of expected asset returns and $\bmSigma$ the
covariance matrix of returns which we assume to be symmetric and positive
definite. For the portfolio return and variance we have
$\mu(\bmtheta)=\bmmu^\transp \bmtheta$ and
$\sigma(\bmtheta)^2 = \bmtheta^\transp \bmSigma\, \bmtheta$.
For $\varphi>0$, (\ref{eq_alpha_def}) becomes a problem of
parametric quadratic convex programming
\begin{equation}
\alpha(\varphi) = \min_{ \bmtheta \in \mathcal{S}^n} \{
- \bmmu^\transp \bmtheta +  \frac{\varphi}{2} \bmtheta^\transp \bmSigma\, \bmtheta \}\,
\label{eq_alpha_def_quad}
\end{equation}
over the compact convex simplex $ \mathcal{S}^n$. In this section,
we shall discuss qualitative properties of the value function
$\alpha=\alpha(\varphi)$ for this case. By $C^{k,1}(\R^+)$ we
denote the space of all functions defined on $(0,\infty)$
whose  $k$-th derivative is Lipschitz continuous. By
$\alpha^\prime(\varphi)$ we denote the derivative of $\alpha(\varphi)$
w.r. to $\varphi$.

\begin{theorem}\label{smootheness}
Let $\bmSigma \succ 0$ be positive definite and $\bmmu\in\R^n$. Then the optimal value function $\alpha(\varphi)$ defined as in (\ref{eq_alpha_def_quad}) is a $C^{1,1}$ continuous function. Moreover, $\varphi\mapsto\alpha(\varphi)$ is a  strictly increasing function and 
\begin{equation}
\alpha^\prime(\varphi) =
 \frac{1}{2} \hat\bmtheta^\transp \bmSigma
 \hat\bmtheta\,,  \label{eq_alphader_vzorec}
\end{equation}
where $\hat\bmtheta=\hat\bmtheta(\varphi) \in \mathcal{S}^n$ is
the unique minimizer of (\ref{eq_alpha_def_quad}) for
$\varphi >0$. The function $(0,\infty) \ni \varphi
\mapsto \hat{\bmtheta}(\varphi) \in\R^n$  is locally Lipschitz continuous.
\end{theorem}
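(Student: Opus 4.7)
The plan is to establish the theorem in three steps: (i) uniqueness and local Lipschitz continuity of the minimizer $\hat\bmtheta(\varphi)$, (ii) the derivative formula \eqref{eq_alphader_vzorec} via an envelope argument, and (iii) strict monotonicity together with $C^{1,1}$ regularity.

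Since $\bmSigma \succ 0$ and $\varphi > 0$, the objective
\[
F(\bmtheta, \varphi) := -\bmmu^\transp \bmtheta + \frac{\varphi}{2}\bmtheta^\transp \bmSigma \bmtheta
\]
is strongly convex in $\bmtheta$ with modulus $\varphi\,\lambda_{\min}(\bmSigma)$, and the feasible set $\mathcal{S}^n$ is compact and convex; existence and uniqueness of $\hat\bmtheta(\varphi)$ follow. To obtain local Lipschitz continuity of $\varphi \mapsto \hat\bmtheta(\varphi)$, the approach is to use the first-order variational inequality
\[
\langle -\bmmu + \varphi_i \bmSigma \hat\bmtheta_i,\ \bmtheta - \hat\bmtheta_i \rangle \geq 0, \quad \hat\bmtheta_i := \hat\bmtheta(\varphi_i),\ i=1,2,
\]
valid for every $\bmtheta \in \mathcal{S}^n$. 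Testing these at $\bmtheta = \hat\bmtheta_2$ and $\bmtheta = \hat\bmtheta_1$ respectively, adding, and splitting $\varphi_1 \bmSigma \hat\bmtheta_1 - \varphi_2 \bmSigma \hat\bmtheta_2 = \varphi_1 \bmSigma(\hat\bmtheta_1 - \hat\bmtheta_2) + (\varphi_1 - \varphi_2)\bmSigma \hat\bmtheta_2$ yields
\[
\varphi_1 \lambda_{\min}(\bmSigma)\,|\hat\bmtheta_1 - \hat\bmtheta_2|^2 \leq |\varphi_1 - \varphi_2|\,\|\bmSigma \hat\bmtheta_2\|\,|\hat\bmtheta_1 - \hat\bmtheta_2|,
\]
which, combined with boundedness of $\hat\bmtheta$ on $\mathcal{S}^n$, gives $|\hat\bmtheta_1 - \hat\bmtheta_2| \leq C|\varphi_1 - \varphi_2|$ on any interval $[\delta,\infty)$ with $\delta > 0$.

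Continuity of $\hat\bmtheta(\cdot)$ together with smoothness of $F(\bmtheta,\cdot)$ places the problem within the scope of Danskin's envelope theorem, which yields differentiability of $\alpha$ and the identity
\[
\alpha'(\varphi) = \partial_\varphi F(\hat\bmtheta(\varphi),\varphi) = \tfrac{1}{2}\,\hat\bmtheta(\varphi)^\transp \bmSigma\, \hat\bmtheta(\varphi),
\]
proving \eqref{eq_alphader_vzorec}. Because every $\bmtheta \in \mathcal{S}^n$ satisfies $\mathbf{1}^\transp \bmtheta = 1$ and is therefore nonzero, $\bmSigma \succ 0$ forces $\alpha'(\varphi) > 0$ on $(0,\infty)$, giving strict monotonicity. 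Finally, $\alpha'$ is the composition of the smooth quadratic $\bmtheta \mapsto \tfrac{1}{2}\bmtheta^\transp \bmSigma \bmtheta$ with the locally Lipschitz map $\hat\bmtheta$ ranging in the compact simplex, and is hence itself locally Lipschitz, so $\alpha \in C^{1,1}(\R^+)$ as claimed.

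The principal obstacle is the Lipschitz dependence of the constrained minimizer on the parameter: uniqueness alone only gives continuity, and the active-set structure of the KKT conditions makes $\hat\bmtheta(\varphi)$ piecewise smooth but not globally differentiable (the active set may jump as $\varphi$ varies), so an implicit function theorem cannot be invoked globally. The variational inequality argument sidesteps this entirely by extracting Lipschitz dependence directly from strong monotonicity of $\bmSigma$ rather than by tracking active sets.
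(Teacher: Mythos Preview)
Your proof is correct and follows the same overall architecture as the paper's --- uniqueness of the minimizer, an envelope theorem for the derivative formula, positivity of the quadratic form for strict monotonicity, and composition for $C^{1,1}$ regularity. The two substantive differences are worth noting. First, for the envelope step the paper invokes the Milgrom--Segal theorem rather than Danskin's; either applies here since the objective is linear in $\varphi$ and the minimizer is unique over a compact set. Second, and more interestingly, for the local Lipschitz continuity of $\hat\bmtheta(\varphi)$ the paper simply cites the general parametric-programming result of Klatte (and Aubin), whereas you derive it directly from the variational inequality characterization of the minimizer, exploiting strong monotonicity of $\varphi\bmSigma$. Your argument is more elementary and self-contained, and it makes explicit the dependence of the Lipschitz constant on the lower bound $\delta$ for $\varphi$ --- a useful quantitative byproduct. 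The paper's approach, in turn, situates the result within a broader framework and would extend more readily to less structured constraint sets or nonquadratic objectives.
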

\begin{proof}
First, we notice that the mapping  $(0,\infty)\ni \varphi \mapsto \hat{\bmtheta}(\varphi) \in\mathcal{S}^n$ is continuous, which can be deduced directly from basic properties of strictly convex functions minimized over the compact convex set $\mathcal{S}^n$. 

Let us denote $f(\bmtheta, \varphi):= - \bmmu^\transp \bmtheta
+ \varphi \frac{1}{2} \bmtheta^\transp \bmSigma \bmtheta$ the
objective function in problem (\ref{eq_alpha_def_quad}). Since $|\partial_{\varphi}
f(\bmtheta,\varphi)|$ is a continuous function on the compact
set $\mathcal{S}^n$, we have  $\sup_{\bmtheta \in \mathcal{S}^n}
|\partial_{\varphi} f(\bmtheta,\varphi)|= C(\varphi) <\infty$.
Strict convexity of $f$ in $\bmtheta$ implies the existence of a
unique minimizer $\hat\bmtheta\equiv \hat\bmtheta(\varphi)$  to
(\ref{eq_alpha_def_quad}). Moreover, $\partial_{\varphi}
f(\hat{\bmtheta}(\varphi),\varphi)\equiv \frac{1}{2}
\hat\bmtheta(\varphi)^\transp \bmSigma \hat\bmtheta(\varphi)$ is continuous in $\varphi$
due to continuity of $\hat{\bmtheta}(\varphi)$. Applying the
general envelope theorem due to Milgrom and Segal \cite[Theorem
2]{milgrom_segal2002} the function $\alpha(\varphi)$ is
differentiable on the set $(0,\infty)$.

Next, we prove that $\alpha^\prime(\varphi)>0$. The function
$f(\bmtheta,\varphi)$ is linear in $\varphi$ for any $\bmtheta
\in \mathcal{S}^n$. Therefore it is absolutely continuous in
$\varphi$ for any $\bmtheta$. Again, applying \cite[Theorem
2]{milgrom_segal2002}, we obtain
\begin{displaymath}
\alpha(\varphi) = \alpha(0) + \int_0^\varphi \partial_{\varphi}
f(\hat\bmtheta(\xi),\xi)\, \hbox{d}\xi\,.
\end{displaymath}
Therefore $\alpha^\prime(\varphi) =
\partial_{\varphi} f(\hat\bmtheta(\varphi),\varphi) =  \frac{1}{2} \hat\bmtheta(\varphi)^\transp \bmSigma \hat\bmtheta(\varphi)$, which is strictly positive on $\mathcal{S}^n$.  Hence $\varphi\mapsto \alpha(\varphi)$ is a $C^1$ continuous and increasing function for $\varphi>0$.

Local Lipschitz continuity of $\alpha^\prime(\varphi)$ now
follows from the general result proved by Klatte in \cite{Klatte} (see also Aubin \cite{Aubin}). Indeed, according to \cite[Theorem 2]{Klatte} the minimizer
function $\hat\bmtheta(\varphi)$ is locally Lipschitz
continuous in $\varphi$. Hence the derivative $\alpha^\prime(\varphi) =
\frac{1}{2} \hat\bmtheta(\varphi)^\transp \bmSigma
\hat\bmtheta(\varphi)$ is locally Lipschitz, as well.
\end{proof}

\begin{corollary}
Equation (\ref{eq_PDEphi_1}) is a strictly parabolic PDE, i.e.
there exist positive real numbers $\lambda^{-}, \lambda^+ \in (0,
\infty)$, such that for the diffusion coefficient
$\alpha^\prime(\varphi)$ of equation (\ref{eq_PDEphi_1}) the
following inequalities hold:
\begin{equation}
0<\lambda^- \le \alpha^\prime(\varphi) \le \lambda^+ <\infty
\qquad \hbox{ for all } \varphi >0 \,.
\label{bounds}
\end{equation}
\end{corollary}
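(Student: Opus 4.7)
The plan is to read off the bounds directly from the explicit formula $\alpha^\prime(\varphi) = \frac12 \hat\bmtheta(\varphi)^\transp \bmSigma \hat\bmtheta(\varphi)$ established in Theorem~\ref{smootheness}, using only the fact that $\bmSigma \succ 0$ and that $\hat\bmtheta(\varphi)$ is constrained to lie in the compact simplex $\mathcal{S}^n$. No differentiation of $\hat\bmtheta$ in $\varphi$ is needed; the bounds are uniform in $\varphi$ because they come entirely from the geometry of $\mathcal{S}^n$ together with the spectrum of $\bmSigma$.

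For the upper bound, I would denote by $\lambda_{\max}(\bmSigma)$ the largest eigenvalue of $\bmSigma$. Since $\hat\bmtheta \in \mathcal{S}^n$ satisfies $\mathbf{1}^\transp \hat\bmtheta = 1$ with $\hat\bmtheta \ge \mathbf{0}$, we have $\|\hat\bmtheta\|_2^2 \le \|\hat\bmtheta\|_1^2 = 1$, and therefore
\[
\alpha^\prime(\varphi) = \tfrac12 \hat\bmtheta^\transp \bmSigma \hat\bmtheta \le \tfrac12 \lambda_{\max}(\bmSigma)\, \|\hat\bmtheta\|_2^2 \le \tfrac12 \lambda_{\max}(\bmSigma) =: \lambda^+ < \infty,
\]
which is independent of $\varphi$.

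For the lower bound, I would use positive definiteness to write $\hat\bmtheta^\transp \bmSigma \hat\bmtheta \ge \lambda_{\min}(\bmSigma)\, \|\hat\bmtheta\|_2^2$ with $\lambda_{\min}(\bmSigma)>0$. The key observation is that $\hat\bmtheta$ cannot be too small on $\mathcal{S}^n$: from $1 = \mathbf{1}^\transp \hat\bmtheta \le \sqrt{n}\,\|\hat\bmtheta\|_2$ (Cauchy--Schwarz) one gets $\|\hat\bmtheta\|_2^2 \ge 1/n$. Consequently
\[
\alpha^\prime(\varphi) \ge \tfrac{1}{2n}\, \lambda_{\min}(\bmSigma) =: \lambda^- > 0,
\]
again uniformly in $\varphi$. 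Combining the two estimates yields (\ref{bounds}).

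There is no real obstacle here; the corollary is essentially a one-line consequence of the envelope formula (\ref{eq_alphader_vzorec}) together with the two elementary facts that $\|\hat\bmtheta\|_2$ is bounded above by $1$ and below by $1/\sqrt{n}$ on the unit simplex. The only point worth emphasising is that the lower bound is \emph{uniform} in $\varphi>0$, which is what makes (\ref{eq_PDEphi_1}) \emph{strictly} (and not merely weakly) parabolic, and this uniformity relies precisely on the constraint $\mathbf{1}^\transp\hat\bmtheta = 1$ preventing $\hat\bmtheta$ from collapsing to zero.
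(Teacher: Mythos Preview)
Your proof is correct and follows essentially the same approach as the paper: both invoke the envelope formula (\ref{eq_alphader_vzorec}) and then bound the positive definite quadratic form $\frac12\hat\bmtheta^\transp\bmSigma\hat\bmtheta$ over the compact simplex $\mathcal{S}^n$. The only difference is cosmetic: the paper simply asserts that a continuous strictly positive function on a compact set attains its positive minimum and finite maximum, whereas you make the constants explicit via the eigenvalues of $\bmSigma$ and the bounds $1/n\le\|\hat\bmtheta\|_2^2\le1$ on the simplex.
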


\begin{proof}
These inequalities follow directly from
(\ref{eq_alphader_vzorec}), which is a  quadratic positive
definite form on a compact set $\mathcal{S}^n$. With regard to (\ref{eq_alphader_vzorec}) the function $\alpha^\prime(\varphi)$ attains its
maximum $\lambda^+$ and minimum $\lambda^-$.
\end{proof}

\begin{example}\label{example-dax}
An illustrative example of the value function $\alpha$ having discontinuous second derivative $\alpha^{\prime\prime}$ is based on real
market data and it is depicted in Fig.~\ref{fig:alphader_dax}. In this
example we consider the German DAX Index consisting of 30 stocks.
Based on historical data from August 2010 to April 2012 we have
computed the covariance matrix $\bmSigma$ and the vector of mean
returns $\bmmu$. One can observe that there are at least two points of
discontinuity of the second derivative
$\alpha^{\prime\prime}(\varphi)$.

\begin{figure}
\begin{center}
\includegraphics[width=0.45\textwidth]{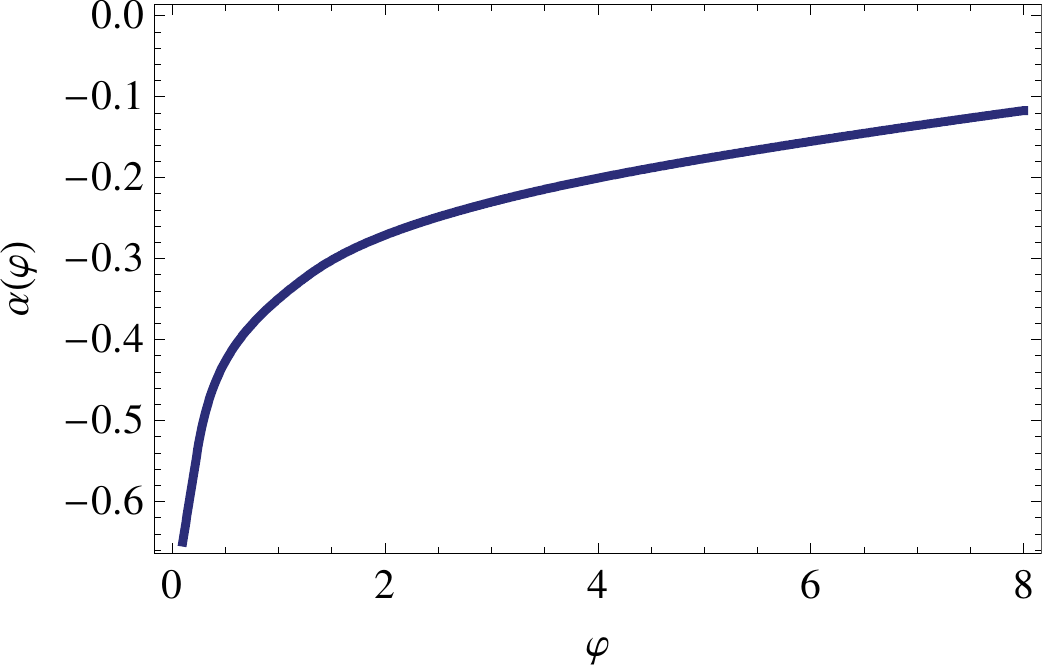}
\includegraphics[width=0.45\textwidth]{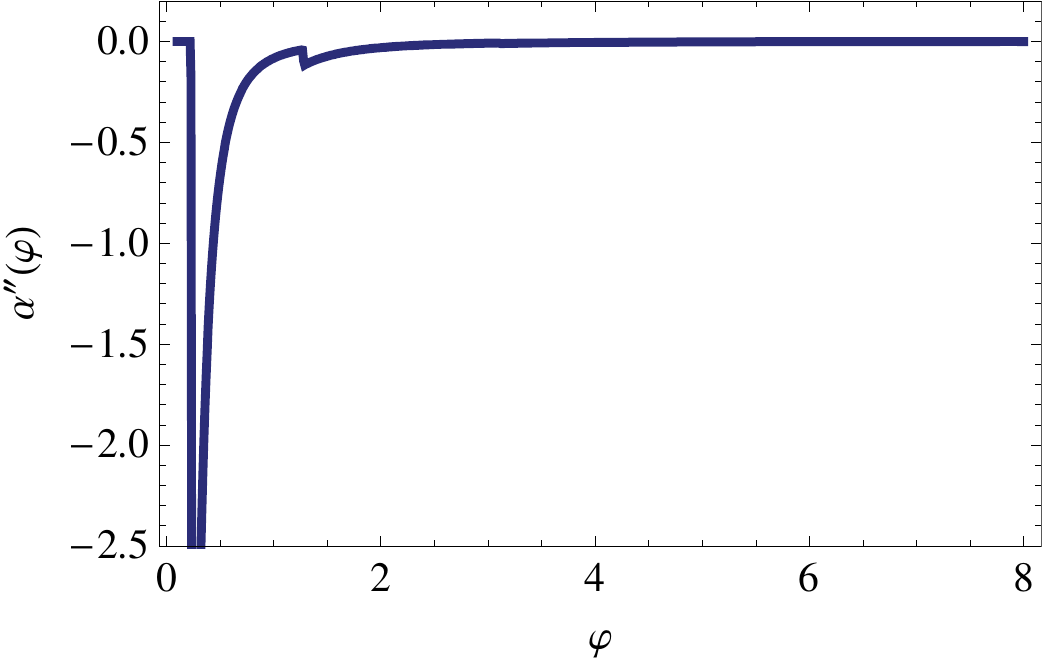}
\end{center}
\caption{%
The value function $\alpha$ and its second derivative $\alpha^{\prime\prime}$ for
the portfolio of the German DAX 30 Index, computed from historical
data, August 2010--April 2012. Source: finance.yahoo.com}
\label{fig:alphader_dax}
\end{figure}

\end{example}

\subsection{Higher smoothness of the value function.}

In this section we discuss further smoothness properties of the
value function $\alpha=\alpha(\varphi)$ in the $\varphi$
variable, for the case specified at the beginning of Section \ref{sec:multiportf}. We furthermore show that the function $\alpha$ is
locally a rational function which is concave on an open set.

Let us denote ${\mathcal I}_\emptyset$ the set
\[
{\mathcal I}_\emptyset = \{ \varphi>0\ |\  \hat\theta_i(\varphi)
>0\  \forall i=1,\cdots, n\}.
\]
Then
\[
(0,\infty) = {\mathcal I}_\emptyset \cup \bigcup_{|M|\le n-1}
{\mathcal I}_M, \ \ \hbox{where} \ \ {\mathcal I}_M = \{
\varphi>0\, |\  \hat\theta_i(\varphi) = 0 \Leftrightarrow i\in
M\},
\]
and $M$ varies over all subsets of active indices,  $M\subset
\{1,\cdots, n\}$. Here $|M|$ denotes the number of elements of the
set $M$. Since $\varphi \mapsto \hat\bmtheta(\varphi)$ is
continuous, the set $ {\mathcal I}_\emptyset$ is open.

First, let us consider the case $\varphi\in{\mathcal I}_\emptyset$. If we
introduce the Lagrange function $L(\bmtheta, \lambda) =
\frac{\varphi}{2} \bmtheta^\transp \bmSigma \bmtheta -
\bmmu^\transp \bmtheta  -\lambda \mathbf{1}^\transp \bmtheta$ then
the optimal solution $\hat\bmtheta=\hat\bmtheta(\varphi)$ and the
Lagrange multiplier $\lambda=\lambda(\varphi)$ are given by:
\[
\hat\bmtheta = \frac{1}{\varphi} \left(\bmSigma^{-1} \bmmu  + \lambda  \bmSigma^{-1} \mathbf{1}\right) , \quad \lambda = \frac{\varphi - \mathbf{1}^\transp \bmSigma^{-1} \bmmu }{ \mathbf{1}^\transp \bmSigma^{-1} \mathbf{1} }.
\]
Hence
\begin{equation}
\hat\bmtheta(\varphi) = \mathbf{a} - \frac{1}{\varphi} \mathbf{b}
\quad\hbox{and}\ \ \alpha(\varphi) = a\varphi  - \frac{b}{\varphi}
+ c, \label{A1}
\end{equation}
where $\mathbf{a}, \mathbf{b} \in\R^n$ can be expressed as follows:
\begin{equation}
\mathbf{a} =  \frac{1}{\mathbf{1}^\transp \bmSigma^{-1}
\mathbf{1}} \bmSigma^{-1} \mathbf{1},\quad \mathbf{b} =  -
\bmSigma^{-1} \bmmu  + \frac{\bmmu^\transp \bmSigma^{-1}
\mathbf{1}}{\mathbf{1}^\transp \bmSigma^{-1}
\mathbf{1}}\bmSigma^{-1} \mathbf{1}. \label{A1b}
\end{equation}
After straightforward
calculations we conclude
\begin{equation}
a= \frac{1}{2} \frac{1}{\mathbf{1}^\transp \bmSigma^{-1}
\mathbf{1}}>0, 
\ \  b= \frac{1}{2} \bmmu^\transp \bmSigma^{-1}
\bmmu - \frac{1}{2} \frac{(\mathbf{1}^\transp \bmSigma^{-1}
\bmmu)^2} {\mathbf{1}^\transp \bmSigma^{-1} \mathbf{1}}\ge 0,
\ \ c = - \frac{\mathbf{1}^\transp \bmSigma^{-1} \bmmu}
{\mathbf{1}^\transp \bmSigma^{-1} \mathbf{1}}. \label{eq:abc}
\end{equation}
The inequality $b\ge 0$ follows from the Cauchy-Schwartz inequality. Notice that $b>0$ unless the vectors $\bmmu$ and $\mathbf{1}$ are linearly dependent.

Now, if $\varphi \in \mathcal I_M$ for some subset $ M\subset
\{1,\cdots, n\}$ of active indices, then the quadratic
minimization problem (\ref{eq_alpha_def}) can be reduced to a
lower dimensional simplex ${\mathcal S}^{n-|M|}$. Hence the
function $\alpha(\varphi)$ is smooth on $\hbox{int} (\mathcal I_M
)$ and therefore $\hat\bmtheta(\varphi)$ and $\alpha(\varphi)$ are
given by:
\begin{equation}
\hat\bmtheta(\varphi) = \mathbf{a}_M - \frac{1}{\varphi} \mathbf{b}_M, \quad\hbox{and}\ \
\alpha(\varphi) = a_M\varphi  - \frac{b_M}{\varphi} + c_M,
\label{A2}
\end{equation}
for any $\varphi \in  \hbox{int} (\mathcal I_M )$ where
$\mathbf{a}_M, \mathbf{b}_M\in\R^n$ and $a_M>0, b_M\ge0$ and
$c_M\in\R$ are constants calculated using the same formulas as in
(\ref{A1b}) and (\ref{eq:abc}), where data (columns and rows) from
$\bmSigma$ and $\bmmu$ corresponding to the active indices in the
particular set $M$ are removed.

\begin{proposition}\label{pocastiachder}
The function $\varphi\mapsto
\alpha(\varphi)$ defined in (\ref{eq_alpha_def}) is a $C^\infty$
smooth function on the open set ${\mathcal J} = {\mathcal
I}_{\emptyset} \cup \bigcup_{|M|\le n-1} \hbox{\rm
int}({\mathcal I}_{M}) \subset (0,\infty)$. It is given by
(\ref{A1}) for $\varphi\in {\mathcal I}_{\emptyset}$ and by
(\ref{A2}) for $\varphi\in \hbox{\rm int}({\mathcal I}_{M})$
where $M\subset \{1,\cdots, n\}$, respectively. 
\end{proposition}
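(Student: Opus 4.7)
The plan is to patch together smoothness from the closed-form expressions (\ref{A1})--(\ref{A2}) that were already derived above, verifying on each piece that those expressions correctly identify $\alpha$ in a \emph{full} neighborhood. All the inputs needed are available. As a preliminary I would record openness of the domain: by construction each $\hbox{int}(\mathcal{I}_M)$ is open, and $\mathcal{I}_\emptyset$ is the continuous preimage of the (relatively) open set $\{\bmtheta \in \mathcal{S}^n \mid \theta_i > 0\ \forall i\}$ under the map $\varphi \mapsto \hat\bmtheta(\varphi)$, which is continuous by Theorem~\ref{smootheness}.

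For $\varphi \in \mathcal{I}_\emptyset$, continuity of $\hat\bmtheta$ keeps the inequality constraints $\theta_i \ge 0$ strictly inactive in a neighborhood, so $\hat\bmtheta(\varphi)$ coincides with the minimizer of the strictly convex QP restricted to the affine hyperplane $\mathbf{1}^\transp \bmtheta = 1$. Writing the Lagrangian $L(\bmtheta,\lambda) = \frac{\varphi}{2}\bmtheta^\transp \bmSigma \bmtheta - \bmmu^\transp \bmtheta - \lambda(\mathbf{1}^\transp \bmtheta - 1)$ and using invertibility of $\bmSigma$, the KKT system immediately yields the closed form (\ref{A1}) with coefficients (\ref{A1b})--(\ref{eq:abc}); the resulting expression $\alpha(\varphi) = a\varphi - b/\varphi + c$ is manifestly $C^\infty$ on $\mathcal{I}_\emptyset \subset (0,\infty)$.

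For $\varphi_0 \in \hbox{int}(\mathcal{I}_M)$ with $M \subsetneq \{1,\ldots,n\}$, I would reduce to a lower-dimensional problem. By the interior hypothesis combined with continuity of $\hat\bmtheta$, the active index set remains exactly $M$ on a whole neighborhood of $\varphi_0$: the coordinates indexed by $M$ vanish, while those outside $M$ stay strictly positive. On that neighborhood the problem is equivalent to the strictly convex QP obtained by deleting from $\bmSigma$ and $\bmmu$ the rows and columns indexed by $M$ and minimizing over the reduced simplex $\mathcal{S}^{n-|M|}$. The reduced covariance matrix is a principal submatrix of $\bmSigma$ and therefore still positive definite, so the Lagrange-multiplier argument of the previous step applies verbatim to the reduced data; this delivers (\ref{A2}) together with $C^\infty$ smoothness on $\hbox{int}(\mathcal{I}_M)$.

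The main (and only delicate) obstacle is justifying that this reduction is valid on a full open neighborhood of $\varphi_0$ rather than merely at $\varphi_0$ itself; this is exactly what the restriction to $\hbox{int}(\mathcal{I}_M)$ buys, through persistence of the active set under the continuity of $\hat\bmtheta$. Once this is in hand, the rational form of $\alpha$ on each piece immediately gives $\alpha \in C^\infty(\mathcal{J})$, as claimed.
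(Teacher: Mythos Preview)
Your proposal is correct and follows essentially the same approach as the paper: the paper's argument is in fact the derivation in the paragraphs preceding the proposition (there is no separate proof environment), namely a direct Lagrangian computation on $\mathcal{I}_\emptyset$ yielding (\ref{A1}), and a reduction to the lower-dimensional simplex $\mathcal{S}^{n-|M|}$ on $\hbox{int}(\mathcal{I}_M)$ yielding (\ref{A2}). Your write-up is somewhat more careful than the paper's exposition in explicitly noting openness of $\mathcal{J}$ and the persistence of the active set on a full neighborhood (which is indeed the only point requiring care, and is guaranteed precisely by working on $\hbox{int}(\mathcal{I}_M)$), but the underlying argument is the same.
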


\subsection{Useful information gained from the second derivative of the value function.} \label{ex:alpha_discontin}

There is a useful information that can be extracted from the shape of $\alpha^{\prime\prime}(\varphi)$. For illustration, let us observe points of
discontinuity of $\alpha^{\prime\prime}(\varphi)$ depicted in Fig. \ref{fig:alphader_dax} for the example of  the German DAX 30 Index. The intervals between the points of discontinuities correspond to
the sets $\mathcal{I}_M$. For the portfolio of the German DAX 30 Index we obtain the sets
of active indices, corresponding to the continuity intervals as summarized in Tab. \ref{tab:activeindices}.

High values of $\varphi$ represent high risk-aversion of the investor. There is only one single asset present with a nonzero weight (equal to one) in the first
interval. This asset is the most risky one and with highest expected
return. Indeed, for lowest values of $\varphi$, investor's risk
aversion is low and therefore they do not hesitate to undergo high
risk for the sake of gaining high return.

\begin{table}
\begin{center}
\small
\begin{tabular}{l|l}
$\mathcal{I}_M$ & $M$ \\ \hline \hline
 (0, 0.23) & \{23\} \\
 (0.23, 1.27) &  \{23, 30\} \\
(1.27, 3.15) &  \{16, 23, 30\} \\
(3.15, 6.62) & \{16, 23, 27, 30\} \\
(6.62, 7.96)  &  \{16, 21, 23, 27, 30\} \\
(7.96, 8.98) & \{15, 16, 21, 23, 27, 30\} \\
(8.98, $\cdots$) & \{1, 15, 16, 21, 23, 27, 30\}  
\end{tabular} \caption{Sets of active indices for the German DAX 30 Index. The assets are labeled by 1 - Adidas, 15 - Fresenius, 16 - Fres Medical, 21 - Linde, 23 - Merck, 27 - SAP, 30~-~ Volkswagen.} \label{tab:activeindices}
\end{center}
\end{table}

Hence, if we were able to bound the parameter $\varphi$ (see Section \ref{sec:existence}) by a constant $\varphi^+ < \infty$, i.e. $\varphi(x,t) \le \varphi^+$, it would be possible to identify the intervals of continuity of $\alpha^{\prime\prime}(\varphi)$ on the interval $(0,\varphi^+]$ and the corresponding sets of active indices. This would provide the investor with information about which assets enter the portfolio with zero weight throughout the time. As will be confirmed in Section \ref{subsec:dax}, in the numerical example of the German DAX 30 Index there are only the assets from Tab. \ref{tab:activeindices}, out of the overall number of thirty, which enter the portfolio with a nonzero weight at some time from $[0,T]$; i.e. the rest of the assets stay inactive for the whole time horizon considered.

\subsection{Example: Explicit form of the value function for the 2D problem.}
\label{sec:2D}

The goal of this section is to present an explicit form of the
value function $\alpha$ for the two dimensional problem.
Furthermore, we show that the result obtained in
Theorem~\ref{smootheness} is optimal in a sense that the function
$\varphi\mapsto\alpha(\varphi)$ is only $C^{1,1}$ smooth but it
is not $C^2$ smooth. Finally, we show that in the case $n=2$ we are
able to explicitly determine the sets ${\mathcal I}_\emptyset$ and
${\mathcal I}_M$.

A vector $\bmtheta \in {\mathcal S}^2$ can be written as $\bmtheta
= (\theta, 1-\theta)^\transp$ where $\theta\in [0,1]$ is a real
number. We denote by $\mu^s,\mu^b$ the mean returns on more risky stocks and less risky
bonds and by  $\sigma^s, \sigma^b > 0$ their standard
deviations. We assume $\mu^s\ge \mu^b\ge 0$ and $\sigma^b -\varrho
\sigma^s \ge 0$ where $\varrho\in [-1,1]$ is the correlation
between returns on stocks and bonds. The mean return $\mu(\theta)$
and variance $\sigma(\theta)^2$ of the portfolio can be expressed
as
\begin{equation}
\mu(\theta) = \theta \mu^s + (1-\theta) \mu^b, \quad
\sigma(\theta)^2 =  \theta^2 \gamma - 2 \theta \delta + (\sigma^b)^2,
\end{equation}
where $\gamma=(\sigma^s)^2+(\sigma^b)^2 - 2 \sigma^s \sigma^b
\varrho$ and $\delta= (\sigma^b)^2 - \sigma^s \sigma^b \varrho$.

For a given $\varphi>0$, the objective function in
(\ref{eq_alpha_def}) is quadratic in $\theta$ with the coefficient
of the quadratic term equal to $\frac{1}{2}\gamma \varphi$.  If we
relax the inequality constraints $0\le \theta\le 1$ then it is an
easy calculus to verify that the unconstrained minimizer
$\hat{\theta}^{uc}$ is given by: $\hat{\theta}^{uc} (\varphi) =
\omega/\varphi + \delta/\gamma \ge 0$, where $\omega = (\mu^s -
\mu^b)/\gamma\ge 0$. Consequently, the optimal solution
$\hat{\theta} = \hat{\theta}(\varphi)$ for the constrained problem
over $\theta\in [0,1]$ can be written in the following form: $\hat{\theta} (\varphi) =
\min
\left\{
\omega/\varphi + \delta/\gamma,\,  1
\right\}$. 
Therefore
\begin{equation}
\alpha(\varphi) = \left\{ \begin{array}{ll}
- \mu^b - \omega\delta - \frac{\omega^2 \gamma}{2\varphi}
+
\frac{\varphi}{2}(1-\varrho^2) (\sigma^s\sigma^b)^2, & \hbox{ if }
\frac{1}{\varphi} <
\frac{1}{\omega}(1-\frac{\delta}{\gamma})\,, \\
\frac{(\sigma^s)^2}{2} \varphi - \mu^s, & \hbox{ if
}\frac{1}{\varphi} \ge
\frac{1}{\omega}(1-\frac{\delta}{\gamma})\,.
\end{array}
\right. \label{eq_alpha}
\end{equation}
In terms of the sets ${\mathcal I}_\emptyset$ and ${\mathcal I}_M$ we have $(0,\infty) = {\mathcal I}_\emptyset \cup {\mathcal I}_{\{1\}}$ where
\[
\begin{matrix}
& {\mathcal I}_\emptyset = (\omega\gamma/(\gamma-\delta),\,  \infty), \hfill
&{\mathcal I}_{\{1\}} = (0, \, \omega\gamma/(\gamma-\delta)], \hfill
&\hbox{if}\ \ \gamma >\delta, \hfill
\\
&{\mathcal I}_\emptyset = \emptyset, \hfill
&{\mathcal I}_{\{1\}} = (0, \infty), \hfill
&\hbox{if}\ \ \gamma \le \delta. \hfill
\end{matrix}
\]
With regard to Proposition~\ref{pocastiachder} the function $\varphi\mapsto \alpha(\varphi)$ is $C^{1,1}$ smooth for $\varphi>0$ and it is $C^\infty$ smooth on the set ${\mathcal J} = (0,\infty) \setminus \{\omega\gamma/(\gamma-\delta)\}$, if $\gamma > \delta$.
Notice that $\gamma > \delta$ iff $\sigma^b -\varrho \sigma^s > 0$. The latter condition is automatically satisfied for nonpositive correlation $\varrho\le 0$ between returns on stocks and bonds.

\section{Existence, uniqueness and boundedness of classical solutions}
\label{sec:existence}

In this section, we investigate properties of classical smooth
solutions to the Cauchy problem for the backward quasi-linear
parabolic equation (\ref{eq_PDEphi_1}) satisfying the terminal
condition at $t=T$. In the first part, we introduce
several function spaces we shall work with. Then we provide useful upper and lower bounds on bounded smooth solutions. Finally, following the methodology based
on the so-called Schauder's type of estimates (cf. Ladyzhenskaya
\cite{LSU}), we shall prove existence and uniqueness of classical
solutions to (\ref{eq_PDEphi_1}).

Let $\Omega=(x_L, x_R)\subset\R$ be a bounded interval. We denote $Q_T
=\Omega\times (0,T)$ the space-time cylinder. Let $0<\lambda<1$. By
$H^{\lambda}(\Omega)$ we denote the Banach space consisting of all
continuous functions $\varphi$ on $\bar\Omega$ which are $\lambda$-H\"older continuous, i.e  the H\"older semi-norm $\langle \varphi \rangle^{(\lambda)} =
\sup_{x,y\in\Omega, x\not= y} |\varphi(x) - \varphi(y)|/|x-y|^\lambda$ is finite.
The norm in the space $H^{\lambda}(\Omega)$ is then the sum of the maximum norm of $\varphi$ and the semi-norm $\langle \varphi \rangle^{(\lambda)}$. The space $H^{2+\lambda}(\Omega)$ consists of all twice continuously differentiable functions $\varphi$ in $\bar\Omega$
whose second derivative $\partial_x^2 \varphi$ belongs to $H^{\lambda}(\Omega)$. The space $H^{2+\lambda}(\R)$ consists of all functions $\varphi:\R\to\R$ such that $\varphi\in
H^{2+\lambda}(\Omega)$ for any bounded $\Omega\subset\R$.

Next, we can define the parabolic H\"older space  $H^{\lambda, \lambda/2}(Q_T)$ of functions defined on a bounded cylinder $Q_T$. It consists of all continuous functions $\varphi(x,t)$ in $\bar{Q}_T$  such that $\varphi$ is $\lambda$-H\"older continuous in the $x$-variable and it is 
$\lambda/2$-H\"older continuous in the $t$-variable. The norm is defined as the sum of the maximum norm and corresponding H\"older seminorms. The space $H^{2+\lambda, 1+\lambda/2}(Q_T)$ consists of all continuous functions on $\bar{Q}_T$ such that $\partial_t\varphi, \partial^2_x\varphi \in H^{\lambda, \lambda/2}(Q_T)$. Finally, the space $H^{2+\lambda, 1+\lambda/2}(\R\times [0,T])$ consists of all functions $\varphi:\R\times [0,T]\to \R$
such that $\varphi \in H^{2+\lambda, 1+\lambda/2}(Q_T)$ for any bounded cylinder $Q_T$. We shall also work with the Lebesgue and Sobolev spaces. By $L_p(Q_T), 1\le p\le \infty,$ we denote the Lebesgue space of all $p$-integrable functions (essentially bounded functions for $p=\infty$) defined on $Q_T$, equipped with the norm: $\Vert\varphi\Vert_{L_p} = (\int_{Q_T} |\varphi|^p)^{1/p},\  \Vert\varphi\Vert_{L_\infty} = \sup_{Q_T} |\varphi|$. The Sobolev space $W^1_2(Q_T)$ consists of all functions $\varphi\in L_2(Q_T)$ such that distributional derivatives  $\partial_x\varphi, \partial_t\varphi \in L_2(Q_T)$. The norm is defined as $\Vert\varphi\Vert_{W^1_2} = \Vert\varphi\Vert_{L_2} + \Vert\partial_t \varphi\Vert_{L_2} + \Vert\partial_x \varphi\Vert_{L_2}$. Finally, the parabolic Sobolev space $W^{2,1}_2(Q_T)$ consists of all functions $\varphi\in L_2(Q_T)$ such that 
$\partial_x\varphi, \partial^2_x\varphi, \partial_t\varphi \in L_2(Q_T)$,  $\Vert\varphi\Vert_{W^{2,1}_2} = \Vert\varphi\Vert_{L_2} + \Vert\partial_t \varphi\Vert_{L_2} 
+ \Vert\partial_x \varphi\Vert_{L_2} + \Vert\partial^2_x \varphi\Vert_{L_2}$ (cf. \cite[Chapter I]{LSU}).

We first derive lower and upper bounds of a solution $\varphi$ to the
Cauchy problem (\ref{eq_PDEphi_1}). The idea of proving upper and lower estimates for $\varphi(x,t)$ is based on construction of suitable sub- and super-solutions to the parabolic equation (\ref{eq_PDEphi_1}) (cf. \cite{PROTTER,LSU}).

\begin{remark}
Recall that the value
$\varphi(x,t) -1 $ can be interpreted as the coefficient of
absolute risk aversion for the intermediate utility (value)
function $V(x,t)$. Therefore, upper and lower
bounds for the solution $\varphi(x,t)$ can be also used in
estimation of the absolute risk aversion from above and below.
\end{remark}

\begin{proposition}
\label{th:CompPsi} Suppose that the terminal condition
$\varphi(x,T)$ is positive and uniformly bounded from above, i.e.,
there exists a constant  $\varphi^+$ such that  $0  < \varphi(x,T)
\le \varphi^+$ for any $x\in \R$. Assume $\alpha=\alpha(\varphi)$ is a smooth function   satisfying (\ref{bounds}). If $\varphi\in H^{2+\lambda, 1+\lambda/2}(\R\times [0,T])\cap L_\infty(\R\times(0,T))$, for some $0<\lambda<1$, is a bounded solution to the Cauchy problem for quasi-linear parabolic equation (\ref{eq_PDEphi_1}) then it satisfies the following inequalities:
\[
0<  \varphi(x,t) \le  \varphi^+, \quad \hbox{for any}\ t\in [0,T)\ 
\hbox{and}\ x\in\R.
\]
\end{proposition}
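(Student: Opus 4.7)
The plan is to recast (\ref{eq_PDEphi_1}) as a forward-in-time quasi-linear parabolic equation via the time reversal $\tau = T-t$, and then sandwich $\varphi$ between the constant sub-solution $\underline{\varphi}\equiv 0$ and the constant super-solution $\bar\varphi\equiv \varphi^+$ using the parabolic comparison principle. Writing the diffusion term in divergence form $\partial_x^2\alpha(\varphi) = \partial_x[\alpha'(\varphi)\partial_x\varphi]$ (which is pointwise meaningful because $\alpha\in C^{1,1}$ and $\varphi\in H^{2+\lambda,1+\lambda/2}$), equation (\ref{eq_PDEphi_1}) becomes
\[
\partial_\tau \varphi = \partial_x\bigl[\alpha'(\varphi)\partial_x\varphi\bigr] + \partial_x\bigl[(\varepsilon e^{-x}+r)\varphi + (1-\varphi)\alpha(\varphi)\bigr],
\]
whose principal coefficient satisfies $\lambda^- \le \alpha'(\varphi)\le \lambda^+$ by (\ref{bounds}), so the equation is uniformly parabolic on $\R\times(0,T]$.

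For the upper bound I would insert the constant $\bar\varphi\equiv\varphi^+$ into the right-hand side: the diffusion contribution drops out and only the $x$-derivative of the convective flux survives, producing the residual $-\varepsilon e^{-x}\varphi^+$. Thus $\partial_\tau \bar\varphi - (\text{RHS}) = \varepsilon e^{-x}\varphi^+ \ge 0$ under the standing assumption $\varepsilon\ge 0$ of the pension-saving motivation in Section~\ref{sec:motivation}, so $\bar\varphi$ is a super-solution which at $\tau=0$ dominates $\varphi(\cdot,T)$. The comparison principle therefore yields $\varphi(x,t)\le \varphi^+$ on $\R\times[0,T]$. For the lower bound, the constant $\underline{\varphi}\equiv 0$ is in fact an exact stationary solution: after substitution $\alpha(\varphi)$ becomes a constant and every $x$-derivative term vanishes. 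Since $\varphi(\cdot,T)>0$, comparison gives $\varphi\ge 0$, and the strong maximum principle applied to the operator linearised at the known solution $\varphi$ (its coefficients being H\"older continuous by virtue of $\varphi\in H^{2+\lambda,1+\lambda/2}$ and $\alpha\in C^{1,1}$, together with Theorem~\ref{smootheness}) upgrades this to the strict inequality $\varphi(x,t)>0$ for all $t<T$.

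The main obstacle is that the spatial domain is the whole line $\R$, where the weak maximum principle for parabolic equations normally requires a growth restriction at infinity. I would resolve this by invoking a Phragm\'en--Lindel\"of-type version of the parabolic comparison principle for bounded classical solutions of uniformly parabolic equations with bounded lower-order coefficients (cf.\ Protter and Weinberger \cite{PROTTER}, and Ladyzhenskaya \emph{et al.}\ \cite{LSU}). The hypotheses of that framework fit the proposition exactly: $\varphi\in L_\infty(\R\times(0,T))$ supplies the global growth control, the bounds (\ref{bounds}) give uniform parabolicity, and $\alpha\in C^{1,1}$ together with the explicit boundedness of $(\varepsilon e^{-x}+r)$ on right half-lines $\{x\ge x_0\}$ yields the local Lipschitz control of the convective coefficient in $\varphi$ needed to compare the two candidate barriers against $\varphi$.
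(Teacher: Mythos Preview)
Your proposal is correct and follows essentially the same approach as the paper: time reversal $\tau=T-t$, constant barriers $\underline{\varphi}\equiv 0$ and $\overline{\varphi}\equiv\varphi^+$, and the parabolic comparison principle (with the same references \cite{PROTTER,LSU}). Your residual computation $\varepsilon e^{-x}\varphi^+$ for the super-solution is in fact the correct one (the $r$-term disappears under $\partial_x$), and you are more explicit than the paper about two points the paper glosses over: the Phragm\'en--Lindel\"of justification on the unbounded domain (using $\varphi\in L_\infty$) and the strong maximum principle for the strict lower bound.
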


\begin{proof}
Equation (\ref{eq_PDEphi_1}) can be rewritten as a fully nonlinear parabolic equation of the form
\begin{equation}
\label{fullynonlinear}
\partial_\tau \varphi =\mathcal{H}(x, t,\varphi,\partial_x \varphi, \partial_x^2 \varphi),
\end{equation}
where $\tau=T-t \in (0,T)$ and $\mathcal{H} \equiv
\partial_{x}^2 \alpha(\varphi) + \partial_x \left[\alpha(\varphi) + (\varepsilon  e^{-x}+r) \varphi - \alpha(\varphi) \varphi \right]$. Notice that the right-hand side of (\ref{fullynonlinear}) is a strictly parabolic operator such that
\[
0<\lambda^- \le \partial_q \mathcal{H} (x, t,\varphi,p,q) \equiv
\alpha^\prime (\varphi) \le  \lambda_+ < \infty\,,
\]
for all $\varphi>0$. Let us define constant sub- and
super-solution $\underline{\varphi}$ and $\overline{\varphi}$ as
follows:
\[
\underline{\varphi}(x,t) \equiv 0 ,\quad  \overline{\varphi}(x,t)
\equiv \varphi^+, \quad \hbox{for all} \ x\in \R, \ t\in (0,T).
\]
Clearly, $\mathcal{H}(x, t,\underline{\varphi},\partial_x
\underline{\varphi},
\partial^2_x \underline{\varphi}) \equiv 0$, and $\mathcal{H}(x, t,\overline{\varphi},\partial_x \overline{\varphi}, \partial^2_x \overline{\varphi}) = -  (\varepsilon e^{-x}+r)  \varphi^+ <0$.
Therefore $\underline{\varphi}, \overline{\varphi}$ are indeed sub- and
super-solutions to the strictly parabolic nonlinear equation
(\ref{fullynonlinear}), i.e.
\[
\partial_\tau\underline{\varphi}
\le \mathcal{H}(t,x,\underline{\varphi},\partial_x
\underline{\varphi}, \partial^2_x \underline{\varphi}), \qquad
\partial_\tau\overline{\varphi}
\ge \mathcal{H}(t,x,\overline{\varphi},\partial_x
\overline{\varphi},
\partial^2_x \overline{\varphi}),
\]
satisfying the inequality $\underline{\varphi}(x,T) < \varphi(x,T)
\le \overline{\varphi}(x,T)$ for any $x\in \R$. The inequality $0<
\varphi(x,t) \le  \varphi^+, x\in\R, t\in (0,T),$ is therefore a
consequence of  the parabolic comparison principle for strongly
parabolic equations (see e.g. \cite[Chapter V, (8.2)]{LSU} or \cite{PROTTER}).
\end{proof}

\begin{theorem}\label{existence}
Suppose that $\bmSigma$ is positive definite, $\bmmu\in\R^n, \varepsilon,r\ge 0$ and the optimal value function $\alpha(\varphi)$ is given by
(\ref{eq_alpha_def_quad}). Assume that the terminal condition
$\varphi(x,T) = 1 - U^{\prime\prime}(x)/U^\prime(x)$, $x\in\R$, is positive and 
uniformly bounded for $x\in\R$ and belongs to the H\"older space
$H^{2+\lambda}(\R)$ for some $0<\lambda<1/2$. Then there exists a unique
classical solution $\varphi(x,t)$ to the backward quasi-linear
parabolic equation (\ref{eq_PDEphi_1}) satisfying the terminal
condition $\varphi(x,T)$. The function $t\mapsto \partial_t\varphi(x,t)$ is $\lambda/2$-H\"older continuous for all $x\in\R$ whereas $x\mapsto\partial_x\varphi(x,t)$ is Lipschitz continuous for all $t\in[0,T]$. Moreover, $\alpha(\varphi(.,.))\in H^{2+\lambda, 1+\lambda/2}(\R\times [0,T])$ and $0<\varphi(x,t) \le \sup_{x\in\R} \varphi(x,T)$ for all $(x,t)\in\R\times[0,T)$.
\end{theorem}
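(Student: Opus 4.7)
My plan is to prove existence via Schauder's fixed-point theorem applied to the divergence form of the equation, combining the a priori $L_\infty$ bound from Proposition~\ref{th:CompPsi}, the uniform parabolicity~\eqref{bounds}, and the Lipschitz continuity of $\alpha'$ from Theorem~\ref{smootheness}. The first step is the time-reversal $\tau = T-t$, which converts~\eqref{eq_PDEphi_1} into the forward Cauchy problem
\[
\partial_\tau\varphi = \partial_x\bigl(\alpha'(\varphi)\,\partial_x\varphi\bigr) + \partial_x\bigl[(\varepsilon e^{-x}+r)\varphi + (1-\varphi)\alpha(\varphi)\bigr],\qquad \varphi(x,0) = 1-\frac{U''(x)}{U'(x)}.
\]
I would keep the principal part in divergence form throughout because $\alpha$ is only $C^{1,1}$; in non-divergence form the term $\alpha''(\varphi)(\partial_x\varphi)^2$ need not even be continuous.

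Next I would localize to a bounded cylinder $Q_T^R=(-R,R)\times(0,T)$, prescribing Dirichlet data obtained from a smooth $H^{2+\lambda}$-extension of $\varphi(\cdot,T)$; this also tames the coefficient $\varepsilon e^{-x}$ near $x=-\infty$. On $Q_T^R$ define the convex closed bounded set
\[
K_R = \bigl\{\psi\in H^{\lambda,\lambda/2}(Q_T^R) \,:\, 0\le\psi\le\varphi^+,\ \|\psi\|_{H^{\lambda,\lambda/2}}\le \tilde M\bigr\},
\]
where $\tilde M$ is the constant furnished by the De~Giorgi--Nash--Moser Hölder estimate for bounded solutions of uniformly parabolic divergence-form equations with bounded coefficients, depending only on $\lambda^\pm$, $\varphi^+$, $r$, $\varepsilon$ and the bounded data. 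Set $\mathcal{T}:\psi\mapsto\varphi$ by letting $\varphi$ solve the linear parabolic problem obtained by freezing $\psi$ in every nonlinear occurrence. Since $\alpha'$ is Lipschitz on $[0,\varphi^+]$ by Theorem~\ref{smootheness}, $\alpha'(\psi)\in H^{\lambda,\lambda/2}$ whenever $\psi\in H^{\lambda,\lambda/2}$, and linear Schauder theory (\cite[Ch.~IV]{LSU}) produces a unique $\varphi\in H^{2+\lambda,1+\lambda/2}(Q_T^R)$. The comparison argument of Proposition~\ref{th:CompPsi} gives $0\le\varphi\le\varphi^+$, and the De~Giorgi--Nash--Moser estimate then supplies $\|\varphi\|_{H^{\lambda,\lambda/2}}\le\tilde M$ independently of the Hölder norm of $\psi$, so $\mathcal{T}(K_R)\subset K_R$. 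The compact embedding $H^{2+\lambda,1+\lambda/2}\hookrightarrow H^{\lambda,\lambda/2}$ makes $\mathcal{T}$ a compact continuous self-map, and Schauder's theorem yields a fixed point $\varphi_R\in H^{2+\lambda,1+\lambda/2}(Q_T^R)$.

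Interior Schauder estimates uniform in $R$ then permit a diagonal passage $R\to\infty$ to a solution $\varphi$ on $\R\times[0,T]$ with $0<\varphi\le\varphi^+$ and the claimed local regularity. The statement $\alpha(\varphi)\in H^{2+\lambda,1+\lambda/2}(\R\times[0,T])$ is read directly off the equation in the form $\partial_x^2\alpha(\varphi) = -\partial_\tau\varphi - \partial_x[(\varepsilon e^{-x}+r)\varphi+(1-\varphi)\alpha(\varphi)]$, whose right-hand side lies in $H^{\lambda,\lambda/2}$; the Lipschitz continuity of $x\mapsto\partial_x\varphi$ and the $\lambda/2$-Hölder continuity of $t\mapsto\partial_t\varphi$ follow from the same identity together with $\alpha'\ge\lambda^->0$. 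For uniqueness, subtracting two solutions $\varphi_1,\varphi_2$ produces a difference $w$ that satisfies a linear uniformly parabolic divergence-form PDE with bounded measurable coefficients (derived from the Lipschitz continuity of $\alpha$ and $\alpha'$ on the range of the solutions) and vanishing initial data, whence $w\equiv 0$ by the parabolic maximum principle of~\cite[Ch.~V]{LSU}.

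The main obstacle I anticipate is closing the Schauder bootstrap despite $\alpha$ being only $C^{1,1}$: one must first secure Hölder regularity of $\varphi$ itself via De~Giorgi--Nash--Moser, which only uses $L_\infty$ bounds and the uniform ellipticity~\eqref{bounds}, and only then can the Lipschitz regularity of $\alpha'$ be exploited to conclude $\alpha'(\varphi)\in H^{\lambda,\lambda/2}$ and unlock the linear Schauder estimate. A secondary technicality is the unboundedness of the spatial domain together with the singular coefficient $\varepsilon e^{-x}$ at $x=-\infty$, which is why the argument proceeds first on $Q_T^R$ and then passes to the limit $R\to\infty$ by a diagonal procedure based on $R$-independent interior Schauder estimates.
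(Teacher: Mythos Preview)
Your overall architecture (Schauder fixed point on bounded cylinders, De~Giorgi--Nash--Moser for a priori H\"older bounds, comparison for the $L_\infty$ bound, diagonal passage $R\to\infty$) is reasonable and genuinely different from the paper's route, which proceeds by mollifying $\alpha$ to a smooth $\alpha_{(\delta)}$, invoking the classical LSU Cauchy-problem theorem for the regularized equations, extracting a weak $W^1_2$ limit, and then bootstrapping. However, there is a real gap in your regularity chain. When you freeze $\psi\in H^{\lambda,\lambda/2}$ and solve the linear divergence-form equation with principal coefficient $\alpha'(\psi)\in H^{\lambda,\lambda/2}$, linear Schauder theory does \emph{not} produce $\varphi\in H^{2+\lambda,1+\lambda/2}$: in divergence form with a merely H\"older coefficient you get at most $\varphi\in H^{1+\lambda,(1+\lambda)/2}$, because you cannot pass to non-divergence form without differentiating $\alpha'(\psi)$. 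The fixed point still closes (compactness of $H^{1+\lambda}\hookrightarrow H^{\lambda}$ suffices), but the fixed point $\varphi$ is then only $H^{1+\lambda}$-regular, and your final step ``$\partial_x^2\alpha(\varphi)=-\partial_\tau\varphi-\partial_x[\cdots]\in H^{\lambda,\lambda/2}$'' is circular: it presupposes $\partial_\tau\varphi\in H^{\lambda,\lambda/2}$, which you have not established.

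The paper closes precisely this gap by two moves you are missing. First, energy estimates on the regularized problems give a uniform $W^1_2$ bound, and from the identity $\partial_x^2\alpha(\varphi)=-\partial_t\varphi-\partial_x f$ one reads off $\alpha(\varphi)\in W^{2,1}_2(Q_T)$ (note $\partial_t\alpha(\varphi)=\alpha'(\varphi)\partial_t\varphi\in L_2$ as well). Second --- and this is the key device --- one introduces $z=\alpha(\varphi)$ and observes that $z$ satisfies a \emph{non-divergence} quasi-linear equation $\partial_t z+\zeta(z)\bigl[\partial_x^2 z+\partial_x f(x,\beta(z),z)\bigr]=0$ with $\zeta=\alpha'\circ\alpha^{-1}$ and $\beta=\alpha^{-1}$ both Lipschitz; the Sobolev embedding $W^{2,1}_2\hookrightarrow H^{\lambda,\lambda/2}$ (valid exactly for $0<\lambda<1/2$, which explains the otherwise mysterious hypothesis in the theorem) makes the coefficients of the linearized equation for $z$ H\"older, and then \emph{non-divergence} linear Schauder delivers $z=\alpha(\varphi)\in H^{2+\lambda,1+\lambda/2}$. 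The stated regularity of $\varphi$ itself (Lipschitz $\partial_x\varphi$, $\lambda/2$-H\"older $\partial_t\varphi$) then follows from $\varphi=\beta(z)$ with $\beta\in C^{1,1}$. Your outline neither uses the $z$-substitution nor accounts for the restriction $\lambda<1/2$, which is a strong indication that the bootstrap as you wrote it cannot close.
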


\begin{proof}
A key role in application of the so-called Schauder's theory on existence and uniqueness of classical H\"older smooth solutions to a quasi-linear parabolic equation is played by smoothness of its coefficients. Namely, this theory requires that the diffusion
coefficient of a quasi-linear parabolic equation is sufficiently smooth. Since
$\partial_x^2 \alpha(\varphi) = \partial_x(\alpha^\prime(\varphi) \partial_x\varphi)$ and the diffusion coefficient $\alpha^\prime(\varphi)$ is only Lipschitz continuous in $\varphi$, the backward quasi-linear parabolic equation (\ref{eq_PDEphi_1}) should be regularized first. To this end,  we construct a $\delta$-parameterized family of smooth mollifier functions $\alpha_{(\delta)}(\varphi)$ such that 
\begin{equation}
\alpha_{(\delta)}(\varphi) \rightrightarrows \alpha(\varphi), \quad\text{and}\quad 
\alpha^\prime_{(\delta)}(\varphi) \rightrightarrows \alpha^\prime(\varphi), \quad\text{as}\ \delta\to 0, 
\label{uniform}
\end{equation}
locally uniformly for $\varphi\in (0,\infty)$. Moreover, regularization can be constructed in such a way that $0<\lambda^-/2 \le \alpha_{(\delta)}^\prime(\varphi) \le 2 \lambda^+ <\infty$ for all $\varphi >0,$ and all sufficiently small $0<\delta\ll 1$. 

Now, for any $\delta>0$, by applying Theorem 8.1 and Remark 8.2 from \cite[Chapter V, pp.
495--496]{LSU} we conclude existence of a unique classical bounded solution $\varphi^\delta\in H^{2+\lambda, 1+\lambda/2}(\R\times [0,T])\cap L_\infty(\R\times(0,T))$ to the Cauchy problem 
\begin{equation}
\partial_t \varphi^\delta 
+ \partial_{x} (\alpha^\prime_{(\delta)}(\varphi^\delta) \partial_x\varphi^\delta ) 
+ \partial_x f( \cdot ,\varphi^\delta, \alpha_{(\delta)}(\varphi^\delta)) =  0, 
 \quad \varphi^\delta(x,T) = \varphi(x,T), 
\label{regularized}
\end{equation}
$x\in\R, t\in[0,T)$, where $f(x,\varphi, \alpha(\varphi)) := (\varepsilon  e^{-x} +r)\varphi + (1-\varphi) \alpha(\varphi)$. 

Let $Q_T = (x_L, x_R)\times (0,T)$ be a bounded cylinder in $\R\times  (0,T)$. By virtue of Proposition~\ref{th:CompPsi}, $\varphi^\delta$ is bounded in the norm of the space $L_\infty(Q_T)$. More precisely, 
\[
\Vert\varphi^\delta\Vert_{L_\infty(Q_T)} \le \Vert\varphi(.,T)\Vert_{L_\infty(\R)}, 
\]
for any $0<\delta\ll 1$ (see also inequality (2.31) in \cite[Chapter I]{LSU}). According to the inequality \cite[Chapter I, (6.6)]{LSU} $\varphi^\delta$ is also uniformly bounded in the space  $W^1_2(Q_T)$, i.e. there exists a constant $c_0>0$ such that 
\[
\varphi^\delta>0, \quad \Vert\varphi^\delta\Vert_{W^1_2(Q_T)} \le c_0, 
\]
for any $0<\delta\ll 1$. It means that there exists a subsequence $\varphi^{\delta_k}\rightharpoonup \varphi$ weakly converging to some element $\varphi\in W^1_2(Q_T)$ as $\delta_k\to 0$. Moreover, $\varphi^{\delta_k}(x,t) \to \varphi(x,t)$ for almost every $(x,t)$. Notice that $\varphi^{\delta_k}\to \varphi$ strongly in $L_2(Q_T)$ because of the Rellich-Kondrashov compactness theorem on the embedding $W^1_2(Q_T) \hookrightarrow L_2(Q_T)$ (cf. \cite[Chapter II, Theorem 2.1]{LSU}). 

Hence $\alpha_{(\delta_k)}(\varphi^{\delta_k}) \to \alpha(\varphi)$ and  $\alpha^\prime_{(\delta_k)}(\varphi^{\delta_k}) \to \alpha^\prime(\varphi)$ strongly in $L_2(Q_T)$. This is a consequence of the inequalities
\[
|\alpha_{(\delta)}(\varphi^\delta) - \alpha(\varphi)|
\le 
|\alpha_{(\delta)}(\varphi^\delta) - \alpha(\varphi^\delta)| 
+ 
|\alpha(\varphi^\delta) - \alpha(\varphi)| 
\le 
|\alpha_{(\delta)}(\varphi^\delta) - \alpha(\varphi^\delta)| 
+ 
\lambda^+ |\varphi^\delta - \varphi|,
\]
\[
|\alpha^\prime_{(\delta)}(\varphi^\delta) - \alpha^\prime(\varphi)|
\le 
|\alpha^\prime_{(\delta)}(\varphi^\delta) - \alpha^\prime(\varphi^\delta)| 
+ 
|\alpha^\prime(\varphi^\delta) - \alpha^\prime(\varphi)| 
\le 
|\alpha^\prime_{(\delta)}(\varphi^\delta) - \alpha^\prime(\varphi^\delta)| 
+ 
L |\varphi^\delta - \varphi|,
\]
where $L>0$ is the Lipschitz constant of the function $\varphi\mapsto \alpha^\prime(\varphi)$ (see Theorem~\ref{smootheness}) and (\ref{uniform}) ). 

Multiplying equation (\ref{regularized}) by a function $\eta\in W^1_2(Q_T)$ vanishing on the boundary $\partial Q_T$ and integrating it over the domain $Q_T$ yields  the integral identity:
\[
\int_{Q_T} \partial_t\varphi^\delta \: \eta \:\d x\d t - \int_{Q_T} \left(
\alpha^\prime_{(\delta)}(\varphi)\: \partial_x\varphi^\delta + f(x,\varphi^\delta, \alpha_{(\delta)}(\varphi^\delta)) \right) \: \partial_x \eta\:\d x\d t = 0. 
\]
Passing to the limit $\delta_k\to 0$ we conclude that $\varphi\in W^1_2(Q_T)$ is a weak solution to the backward quasi-linear parabolic equation (\ref{eq_PDEphi_1}) satisfying the integral identity
\[
\int_{Q_T} \partial_t\varphi \: \eta \:\d x\d t - \int_{Q_T} \left(
\alpha^\prime(\varphi)\: \partial_x\varphi + f(x,\varphi, \alpha(\varphi)) \right) \: \partial_x \eta\:\d x\d t = 0 
\]
for any $\eta\in W^1_2(Q_T)$ vanishing on the boundary $\partial Q_T$. Since 
\begin{equation}
\partial_t\varphi + \partial^2_x \alpha(\varphi) + \partial_x f =0
\label{rovnica}
\end{equation}
and $\varphi, f\in W^1_2(Q_T)$ we have $\partial^2_x \alpha(\varphi) \in L_2(Q_T)$. Furthermore, $\partial_t \alpha(\varphi) \in L_2(Q_T)$ because $\varphi\mapsto\alpha^\prime (\varphi)$ is Lipschitz continuous (see Theorem~\ref{smootheness}), $\alpha^\prime(\varphi)>\lambda^-$  and $\partial_t\varphi\in L_2(Q_T)$. Hence $\alpha(\varphi) \in W^{2,1}_2(Q_T)$. 

Recall that the parabolic Sobolev space $W^{2,1}_2(Q_T)$ is continuously embedded into the H\"older space 
$H^{\lambda, \lambda/2}(Q_T)$ for any $0<\lambda<1/2$ (cf. \cite[Lemma 3.3, Chapter II]{LSU}). It follows from equation (\ref{rovnica}) that the transformed function $z(x,t) := \alpha(\varphi(x,t))$ is a solution to the quasi-linear parabolic equation in the non-divergent form: 
\[
\partial_t z + \zeta(z) \left[ 
\partial^2_x z 
+ \partial_x f(x, \beta(z), z)
\right] =0, \qquad z(x,T) = \alpha(\varphi(x,T)),
\]
where $\zeta(z) = \alpha^\prime(\beta(z))$ and $z\mapsto \beta(z)$ is the inverse function to the increasing function $\varphi\mapsto \alpha(\varphi)$, i.e. $\alpha(\beta(z)) = z$ for any $z$. Clearly, $z\mapsto \beta(z), \beta^\prime(z)$ are Lipschitz continuous and so $z\mapsto \zeta(z)$ is Lipschitz continuous as well. Next we make use of a simple boot-strap argument to show that $z=z(x,t)$ is sufficiently smooth. Clearly, it is a solution to the linear parabolic equation in non-divergence form
\[
\partial_t z + a(x,t) \partial^2_x z  + b(x,t) \partial_x z = F(x,t), \qquad z(x,T) = \alpha(\varphi(x,T)),
\]
where $a(x,t):=\zeta(z), b(x,t) = \zeta(z) \left( (\varepsilon e^{-x} +r) \beta^\prime(z) +1 -\beta(z) - z \beta^\prime(z) \right)$ and $F(x,t)=(\varepsilon e^{-x} +r)\beta(z)$ with $z=z(x,t)$. All the coefficients $a, b, F$ belong to the H\"older space $H^{\lambda,\lambda/2}(Q_T)$ because $z\in H^{\lambda, \lambda/2}(Q_T)$. With regard to \cite[Theorem 12.2, Chapter III]{LSU} we have $z\in H^{2+\lambda, 1+\lambda/2}(Q_T)$ and the proof of theorem easily follows.

\end{proof}

\begin{remark}\label{rem:arafunct}
Let us consider a utility function $U(x)= - \frac{1}{a-1} \exp(-(a-1) x)$  which represents an investor with constant coefficient $a>1$ of absolute risk aversion.
Then for the terminal condition $\varphi(x,T)$ we have
$\varphi(x,T)\equiv a$ is a constant function fulfilling all
assumptions of Theorem~\ref{existence} made on the terminal
function $\varphi(.,T)$.
\end{remark}

\begin{remark}\label{rem:generalalpha}
It follows from the proof of Theorem~\ref{existence} that its statement on existence of a H\"older smooth solution $\varphi$ to (\ref{eq_PDEphi_1}) remains true when the value function $\alpha(\varphi)$ is a general $C^{1,1}$ smooth function satisfying the estimates (\ref{bounds}). This allows for consideration of a broader class of value functions defined as in (\ref{eq_alpha_def}) (see also Remark~\ref{rem:merton}).
\end{remark}

Combining Theorems~\ref{smootheness} and \ref{existence} we obtain the following corollary:

\begin{corollary}\label{cor:continuity}
Under the assumptions of Theorem~\ref{existence} there exists a
unique continuous optimal response function $\bmtheta=
\bmtheta(x,t)$ to HJB equation (\ref{eq_HJB}). It is given by
$\bmtheta(x,t) = \hat\bmtheta(\varphi(x,t))$ where $
\hat\bmtheta(\varphi)$ is the optimal solution to
(\ref{eq_alpha_def_quad}) for $\varphi=\varphi(x,t)$. The function
$\R\ni x \mapsto \bmtheta(x,t)\in\R^n$ is Lipschitz continuous for all $t\in[0,T]$.
\end{corollary}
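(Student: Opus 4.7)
The plan is to obtain the optimal response by composition: set $\bmtheta(x,t):=\hat\bmtheta(\varphi(x,t))$, where $\varphi$ is the unique classical solution provided by Theorem~\ref{existence} and $\hat\bmtheta(\varphi)$ is the unique minimizer of the parametric QP (\ref{eq_alpha_def_quad}) provided by Theorem~\ref{smootheness}. Existence and uniqueness of $\bmtheta(x,t)$ then follow at once from the two cited theorems, together with the strict positivity $\varphi(x,t)>0$ that is part of Theorem~\ref{existence}. Its optimality for the original HJB equation (\ref{eq_HJB}) is a direct consequence of Proposition~\ref{equivalence}, which identifies the Riccati-transformed variable $\varphi=1-\partial_x^2 V/\partial_x V$ with the parameter appearing in the pointwise maximization inside (\ref{eq_HJB}), so that $\hat\bmtheta(\varphi(x,t))$ is exactly the argmax.

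Joint continuity of $(x,t)\mapsto\bmtheta(x,t)$ is immediate: Theorem~\ref{existence} provides $\varphi\in H^{2+\lambda,1+\lambda/2}(\R\times[0,T])$, while Theorem~\ref{smootheness} establishes that $\hat\bmtheta:(0,\infty)\to\mathcal{S}^n$ is locally Lipschitz, hence continuous, and the composition of continuous maps is continuous.

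For the Lipschitz claim in the $x$-variable at fixed $t\in[0,T]$, I would proceed in two steps. First, I would show that $\varphi(\cdot,t)$ takes values in a compact subinterval $[\varphi^-,\varphi^+]\subset(0,\infty)$: the upper bound $\varphi^+=\sup_{x\in\R}\varphi(x,T)$ is supplied by Theorem~\ref{existence}, and the lower bound $\varphi^->0$ is obtained by a one-sided parabolic comparison argument analogous to Proposition~\ref{th:CompPsi}, using the constant sub-solution $\underline{\varphi}\equiv\inf_{x\in\R}\varphi(x,T)$, reading the standing assumption on $\varphi(\cdot,T)$ as strictly positive and bounded away from $0$ uniformly in $x$. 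Second, on this compact interval $\hat\bmtheta$ is Lipschitz with some constant $L_{\hat\bmtheta}$ by Theorem~\ref{smootheness}, while $x\mapsto\varphi(x,t)$ is Lipschitz with some constant $L_\varphi(t)$: Theorem~\ref{existence} furnishes a Lipschitz $\partial_x\varphi(\cdot,t)$, and the $L_\infty$ bound on $\varphi$ combined with the bound on $\partial_x^2\varphi$ yields, via Gagliardo--Nirenberg interpolation on $\R$, a uniform bound on $\partial_x\varphi(\cdot,t)$. Chaining the two estimates gives $|\bmtheta(x,t)-\bmtheta(y,t)|\le L_{\hat\bmtheta}\,L_\varphi(t)\,|x-y|$, which is the claim.

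The main obstacle is securing the uniform positive lower bound $\varphi^->0$: Proposition~\ref{th:CompPsi} only yields pointwise strict positivity, and the local Lipschitz constant of $\hat\bmtheta$ can genuinely degenerate as $\varphi\to 0^+$, as the explicit form (\ref{A1}) in Section~\ref{sec:2D} shows (there $\hat\bmtheta(\varphi)=\mathbf{a}-\mathbf{b}/\varphi$ on $\mathcal{I}_\emptyset$). A uniform positive lower bound on the terminal datum $\varphi(\cdot,T)$, coupled with the sub-solution comparison, is therefore the essential ingredient; without it, one would only recover local Lipschitz continuity of $x\mapsto\bmtheta(x,t)$ on bounded intervals of $\R$.
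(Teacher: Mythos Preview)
Your approach is exactly the paper's: the corollary is presented there as an immediate combination of Theorems~\ref{smootheness} and~\ref{existence} with no further argument, and you correctly realize it as the composition $\bmtheta(x,t)=\hat\bmtheta(\varphi(x,t))$ and chain the regularity statements.

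Your discussion of the uniform lower bound is more careful than the paper, but two points deserve correction. First, the constant sub-solution you propose does not work when $\varepsilon>0$: plugging a constant $c>0$ into the right-hand side of (\ref{fullynonlinear}) gives $\mathcal{H}(x,t,c,0,0)=-\varepsilon e^{-x}c<0$, so any positive constant is a \emph{super}-solution, not a sub-solution, and comparison produces no lower bound. Second, the degeneration you fear at $\varphi\to 0^+$ does not actually occur: formula (\ref{A1}) holds only on $\mathcal{I}_\emptyset$, which is bounded away from $0$; for all sufficiently small $\varphi$ the term $-\bmmu^\transp\bmtheta$ dominates in (\ref{eq_alpha_def_quad}), the inequality constraints bind, and $\hat\bmtheta(\varphi)$ is the fixed vertex of $\mathcal{S}^n$ corresponding to the asset of largest mean return (see the 2D case in Section~\ref{sec:2D}, where $\hat\theta\equiv 1$ on $\mathcal{I}_{\{1\}}=(0,\omega\gamma/(\gamma-\delta)]$). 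Thus $\hat\bmtheta$ is constant on some $(0,\varphi_0]$ and locally Lipschitz on the compact $[\varphi_0,\varphi^+]$, hence globally Lipschitz on $(0,\varphi^+]$; no positive lower bound on $\varphi$ is needed, and your chain-rule estimate then goes through directly.
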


\section{A traveling wave solution}
\label{sect:TW}

The aim of this section is to construct a semi-explicit traveling
wave solution to quasi-linear equation (\ref{eq_PDEphi_1}). We
shall utilize such a special solution for testing purposes of the
numerical accuracy and estimating the convergence rate of the
numerical scheme proposed in Section \ref{sec:numscheme}. In order
to construct a traveling wave solution we shall assume
$\varepsilon=0, r=0$  and $\bmSigma$ is positive definite. In this case
\begin{equation}
\partial_t \varphi
+
\partial_{x}^2 \alpha(\varphi)
+
\partial_x
\left[ \alpha(\varphi) - \alpha(\varphi) \varphi \right] =0, \quad
x\in\R, t\in[0,T). \label{EQUX}
\end{equation}
In Theorem~\ref{smootheness} we showed that the function
$\alpha(\varphi)$ is a strictly increasing and locally $C^{1,1}$
smooth function in $\varphi$. Following the analysis and ideas due
to Ishimura and \v{S}ev\v{c}ovi\v{c} (cf. \cite{IshSev}) we shall
construct a traveling wave solution to (\ref{EQUX}) of the form
\[
\varphi(x,t) = v(x + c(T-t)), \qquad x\in\R, \ t\in [0,T],
\]
with the wave speed $ c \in \R$ and the wave profile $v= v(\xi)$.
We notice that the terminal condition $\varphi(x,T)$ to
(\ref{EQUX}) is just the traveling wave profile $v(x)$. 

\begin{remark}
In terms
of the coefficient of absolute risk aversion
$a(x)=-U^{\prime\prime}(x)/U^\prime(x)$ we have $a(x)=v(x)-1$.
Hence, a decreasing traveling wave profile corresponds to a
utility function with decreasing coefficient of absolute risk
aversion $a(x)$. It might be therefore associated with an investor
having higher risk preferences with increasing volume of the
portfolio value $x$.
\end{remark}

Inserting $\varphi(x,t) = v(x + c(T-t))$ into (\ref{EQUX}) we
deduce existence of a constant $K_0\in\R$ such that
\[
\frac{\hbox{d}}{\hbox{d}\xi} \alpha(v(\xi)) = G(v(\xi)), \quad
\hbox{where}\ \ G(v) =  K_0 +  c v  -  \alpha(v) (1-v),
\]
for any $\xi\in\R$. Let us define a new auxiliary variable $z =
\alpha(v)$. Then the function $z=z(\xi)$ satisfies the ODE:
\begin{equation}
z^\prime(\xi) = F (z(\xi)),\quad \xi\in\R,
\label{ode-riskseekingX}
\end{equation}
where $F(z) =  G(\alpha^{-1}(z)) = K_0 + c \alpha^{-1}(z)  - z +   z \alpha^{-1}(z)$.

Now, let us prescribe arbitrary limiting values $0<v^-< v^+<
\infty$ for the traveling wave profile $v(\xi)$ corresponding to
the limits $v^-=\lim_{\xi\to \infty} v(\xi)$, $v^+=\lim_{\xi\to
-\infty} v(\xi)$. We denote by $z^\pm$ the corresponding
$z$-values, i.~e. $z^\pm = \alpha(v^\pm)$. Thus $v^\pm$ are roots
of the function $G$, $G(v^\pm) =0$. Consequently, $F(z^\pm) =0$.

Given $0<v^-< v^+$, the traveling wave speed $c$ and the intercept $K_0$ are uniquely determined from the equation $G(v^\pm)=0$, i.e.
\begin{equation}
c= \frac{\alpha(v^+)(1-v^+) - \alpha(v^-)(1-v^-) }{v^+-v^-}, \qquad K_0= -c v^+ + \alpha(v^+)(1-v^+).
\label{travlspeedcK0}
\end{equation}

According to  Proposition~\ref{pocastiachder}, for any $v\in
{\mathcal J}\subseteq(0,\infty)$, the function $v\mapsto
\alpha(v)$ is $C^\infty$ smooth and it has a form of $\alpha(v) =
a v - b/v  + c$ for some constants $a>0,b\ge 0$ and $c\in\R$. As a
consequence we obtain $h^{\prime\prime}(v)  = -2a -2b/v^3 <0$
where $h(v) := \alpha(v) (1-v)$. Assume $v^\pm \in {\mathcal J}$.
Since $G^\prime(v) = (h(v^+) - h(v^-))/(v^+ - v^-) - h^\prime(v)$
and $h^{\prime\prime}(v^\pm) <0$ we obtain $G^\prime(v^-) < 0,
G^\prime(v^+) > 0$ and $G(v) < 0$ iff $v \in (v^-, v^+)$. In
Fig.~\ref{fig:Gv} we plot the function $G(v)$ calculated from the
function $\alpha$ corresponding to the case of the German DAX 30
Index (see Fig.~\ref{fig:alphader_dax}  and
Example~\ref{example-dax}). We prescribed the roots:  $v^-=0.3$
and $v^+=1.5$.

\begin{figure}
\begin{center}
\includegraphics[width=0.35\textwidth]{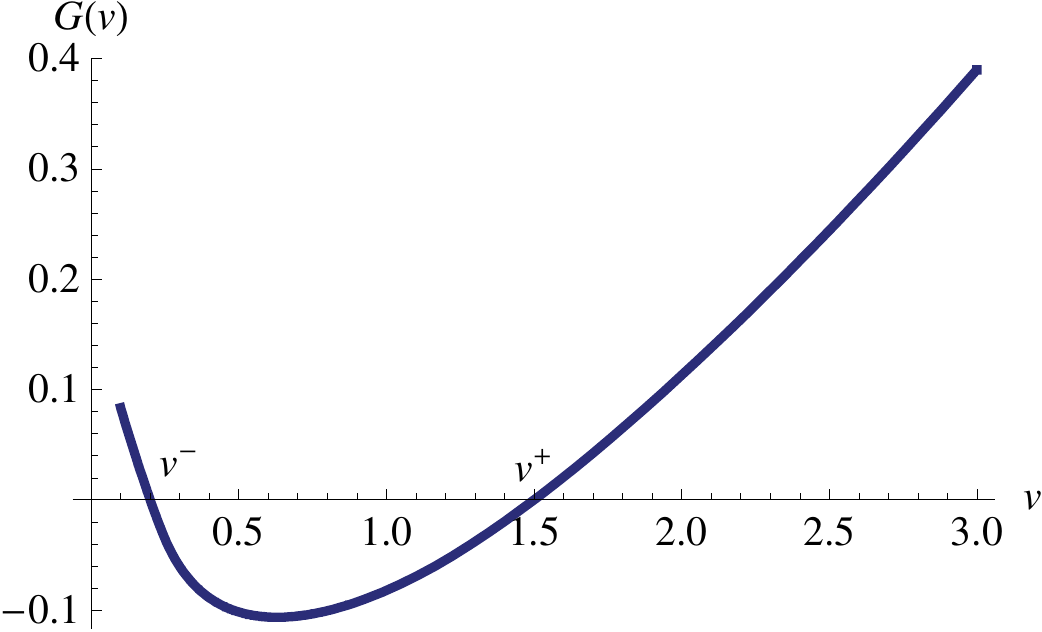}
\end{center}
\caption{%
The function $G(v)$ calculated from $\alpha$ corresponding to the
case of the German DAX 30 Index. Its roots were prescribed as
$v^-=0.3$ and $v^+=1.5$.} \label{fig:Gv}
\end{figure}

Since $F(z) = G(\alpha^{-1}(z))$ and the function $\alpha$ is
increasing we obtain $F^\prime(z^-) <0$ and  $F^\prime(z^+) >0$.
Hence $z^-$ is a stable and $z^+$ an unstable stationary solution
to (\ref{ode-riskseekingX}), i.e. $\lim_{\xi\to\pm\infty} z(\xi)=
z^\mp$ for any solution $z(\xi)$ to (\ref{ode-riskseekingX}) such
that $z(0)\in (z^-, z^+)$.

\begin{theorem}\label{theorem:riskseeking}
Assume $v^\pm \in {\mathcal J}$ are two limiting values $0<v^-<
v^+$. Up to a shift in the $x$ variable there exists a unique
traveling wave solution $\varphi(x,t) = v(x + c (T-t))$ such that
$\lim_{x\to-\infty}\varphi(x,t) = v^+$ and $\lim_{x\to \infty}
\varphi(x,t) = v^-$. The traveling wave profile $v(\xi)$ is a
decreasing function given by $v(\xi) = \alpha^{-1}(z(\xi))$ where
$z=z(\xi)$ is a solution to the ODE  (\ref{ode-riskseekingX}). The
traveling wave speed $c\in\R$ is given by (\ref{travlspeedcK0}).

\end{theorem}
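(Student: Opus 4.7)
The plan is to reduce the PDE to an autonomous first-order ODE for an auxiliary variable and then invoke standard phase-line analysis. First I would substitute the traveling wave ansatz $\varphi(x,t)=v(\xi)$ with $\xi = x + c(T-t)$ into equation (\ref{EQUX}). Since $\partial_t \varphi = -c\, v^\prime(\xi)$ and $\partial_x$ becomes $\frac{\hbox{d}}{\hbox{d}\xi}$, the equation becomes
\begin{equation*}
-c v^\prime + \frac{\hbox{d}}{\hbox{d}\xi}\Bigl(\frac{\hbox{d}}{\hbox{d}\xi}\alpha(v) + \alpha(v)(1-v)\Bigr) = 0,
\end{equation*}
which can be integrated once in $\xi$ to yield
\begin{equation*}
\frac{\hbox{d}}{\hbox{d}\xi}\alpha(v(\xi)) = G(v(\xi)), \qquad G(v) = K_0 + c v - \alpha(v)(1-v),
\end{equation*}
for a constant of integration $K_0\in\R$. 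This is exactly the relation stated in the text just before (\ref{ode-riskseekingX}), and the substitution $z=\alpha(v)$ yields the autonomous ODE $z^\prime = F(z)$ with $F(z) = G(\alpha^{-1}(z))$.

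Next I would fix the constants $c$ and $K_0$ from the required limits. The asymptotic conditions $v(\pm\infty) = v^\mp$ force $v^\pm$ to be zeros of $G$ (since $v^\prime$, hence $\alpha(v)^\prime$, must vanish at the limits), which gives a $2\times 2$ linear system in $(c,K_0)$. Solving it directly gives formula (\ref{travlspeedcK0}), and correspondingly $F(z^\pm)=0$ with $z^\pm = \alpha(v^\pm)$.

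The heart of the argument is the phase-line analysis of $z^\prime = F(z)$ on $(z^-, z^+)$. Following the paragraph preceding the theorem, I would invoke Proposition~\ref{pocastiachder} to write $\alpha(v) = av - b/v + c_{\scriptscriptstyle M}$ locally on ${\mathcal J}$, compute $h(v):=\alpha(v)(1-v)$ and observe $h^{\prime\prime}(v) = -2a - 2b/v^3 < 0$, so $h$ is strictly concave on ${\mathcal J}$. By construction, the affine function $v\mapsto K_0 + c v$ agrees with $h(v)$ at $v=v^\pm$, hence strict concavity of $h$ forces $G = (K_0 + cv) - h(v) < 0$ on $(v^-, v^+)$ and $G^\prime(v^-)<0$, $G^\prime(v^+)>0$. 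Since $\alpha$ is strictly increasing by Theorem~\ref{smootheness}, the same sign pattern transfers to $F$: $F<0$ on $(z^-, z^+)$, $F^\prime(z^-)<0$, $F^\prime(z^+)>0$.

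Finally, existence and uniqueness of the connecting orbit follow from standard ODE theory. Given any $z_0\in(z^-, z^+)$, the local Lipschitz continuity of $F$ on ${\mathcal J}$ (inherited from $\alpha^\prime\in C^{0,1}$) guarantees a unique solution of the initial value problem, and $F<0$ strictly inside the interval with $F^\prime(z^\mp)$ having the stability signs above ensures that this solution is decreasing, strictly monotone, and satisfies $\lim_{\xi\to\pm\infty} z(\xi) = z^\mp$. Translation invariance of the autonomous ODE gives uniqueness up to a shift in $\xi$, which corresponds precisely to a shift in $x$. Setting $v(\xi) = \alpha^{-1}(z(\xi))$ recovers the traveling wave profile, which is decreasing because $\alpha^{-1}$ is increasing and $z$ is decreasing. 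The main obstacle I anticipate is keeping track of the case where $v^\pm$ lie in different connected pieces of ${\mathcal J}$, so that $\alpha$ changes formula across the interval; however, global monotonicity of $\alpha$ on $(0,\infty)$ together with global Lipschitz continuity of $\alpha^\prime$ (Theorem~\ref{smootheness}) suffices to run the ODE argument without requiring $\alpha$ to be smooth everywhere on $[v^-, v^+]$.
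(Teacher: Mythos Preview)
Your proposal is correct and follows essentially the same route as the paper: the discussion preceding the theorem statement in Section~\ref{sect:TW} already carries out the substitution of the ansatz, the first integral yielding $G$, the change of variable $z=\alpha(v)$ leading to (\ref{ode-riskseekingX}), the determination of $c$ and $K_0$ from $G(v^\pm)=0$, and the concavity argument $h''<0$ giving the sign pattern of $G$ and $F$ needed for the phase-line conclusion. Your write-up is in fact slightly more explicit than the paper about the ODE well-posedness (via Lipschitz continuity of $\alpha'$) and about translation invariance yielding uniqueness up to shift, and your closing remark correctly anticipates that global $C^{1,1}$ smoothness of $\alpha$ handles the case where $(v^-,v^+)$ straddles several pieces of ${\mathcal J}$.
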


\section{A numerical finite volume approximation scheme}
\label{sec:numscheme}

This section is devoted to construction of a numerical
approximation scheme for solving the Cauchy problem for the
quasi-linear parabolic equation (\ref{eq_PDEphi_1}). Recall that,
instead of solving the fully nonlinear HJB equation (\ref{eq_HJB})
containing the maximal operator, we proposed its transformation to
the quasi-linear parabolic equation (\ref{eq_PDEphi_1}). In
construction of the iterative numerical scheme we follow the
method of a finite volume approximation scheme (cf. LeVeque
\cite{LeV}) combined with a nonlinear equation iterative solver
proposed by Mikula and K\'utik in \cite{KutikMikula}. There they applied the iterative
finite volume method for solving the generalized Black-Scholes
equation with a volatility term nonlinearly depending on the
second derivative of the option price.

Equation (\ref{eq_PDEphi_1}) belongs to a subclass of quasi-linear
parabolic equations of the general form:
\begin{equation}
\partial_t \varphi + \partial_{x}^2 A(\varphi, x,t) + \partial_{x} B(\varphi,
x,t)+ C(\alpha,\varphi,x,t) = 0, \quad x\in\R, t\in[0,T),
\label{eq_numerics_general}
\end{equation}
satisfying the terminal condition at $t=T$ (cf. \cite{LSU}). In our model we have
\[
A(\varphi,x,t)=\alpha(\varphi),\quad  B(\varphi,x,t)=  (\varepsilon
e^{-x} +r)\varphi + \alpha(\varphi) (1-\varphi), \quad  C \equiv 0.
\]

In order to keep standard PDE notation, we transform the equation
from backward time to a forward one via $\widetilde\varphi(x,
\tau) := \varphi(x, T-t)$. Subsequently, we obtain
$\partial_{\tau} \widetilde\varphi = -
\partial_t \varphi$ and therefore
\begin{equation}
\partial_\tau \widetilde\varphi
=
 \partial_{x}^2 \widetilde A(\widetilde\varphi, x,\tau) + \partial_{x} \widetilde B(\widetilde\varphi,
x,\tau)+ \widetilde C(\alpha,\widetilde\varphi,x,\tau),
\quad\hbox{for any}\ x\in\R, \tau\in(0,T], \label{eq_num}
\end{equation}
with an initial condition
$\widetilde\varphi(x,0)=\widetilde\varphi_0(x) \equiv
\varphi(x,T)$, where $\widetilde A(\widetilde\varphi, x,\tau)
\equiv A(\varphi, x,T-\tau)$  is increasing in $\varphi$, and
$\widetilde B(\widetilde\varphi, x,\tau) \equiv B(\varphi,
x,T-\tau), \widetilde C(\alpha,\widetilde\varphi,x,\tau) \equiv
C(\alpha,\varphi,x,T-\tau)$. For convenience, we shall
drop the $\widetilde{\phantom{a}}$ sign in the following, but we
shall keep in mind that we work with the transformed functions
instead.

Let us consider a bounded computational domain $[x_L, x_R]$ and
spatial discretization mesh points $x_i = x_L+ i h$ for
$i=0,\cdots,n+1$ where $h=(x_R-x_L)/(n+1)$. So $x_0 =x_L$ and
$x_{n+1}=x_R$. The inner mesh points $x_i$, $i=1,\cdots,n,$ are the
centers of the finite volumes cells $(x_{i - \frac{1}{2}}, x_{i
+\frac{1}{2}})$, for simplicity denoted as $(x_{i-}, x_{i+})$. We
have $h=x_{i+} - x_{i-}$. Let us denote $\tau^j=j k, j=0, \cdots,
m$ the time steps, $k=T/m$. Integrating equation (\ref{eq_num})
over finite volumes, applying the midpoint rule on the left-hand
side integral and approximating the time derivative by forward
finite difference with step $k$, we end up with a set of equations
\begin{equation}
\varphi_i^{j+1}  = \frac{k}{h}(I_1 + I_2) + \varphi_i^j\,, \quad
i=1, \cdots ,n,\ j=0, \cdots, m,
\end{equation}
where we have denoted
\begin{equation}
I_1=\int_{x_{i-}}^{x_{i+}} \partial_x (\partial_x A(\varphi,
x,\tau) + B(\varphi, x,\tau)) {\d}x, \quad
 I_2=\int_{x_{i-}}^{x_{i+}} C(\alpha,\varphi, x,\tau) {\d}x\,. \label{eq_defI1}
\end{equation}
Depending on whether the above integrals are being computed on the
$j$-th or the $(j+1)$-th layer, we obtain different approximations.
The symbol $^\star$ will stand either for $j$ or $j+1$.

In order to compute the integral $I_2$ we apply the midpoint rule. We obtain
\begin{equation}
I_2^\star= h C(\alpha_i^\star,\varphi_i^\star, x_i,\tau^\star)\,.
\label{eq_defI3}
\end{equation}
Concerning the integral $I_1$, we shall use the following
notation:
\begin{eqnarray*}
&& D_{i\pm}^\star =
\partial_{\varphi} A(\varphi,x,\tau) |_{\varphi_{i
\pm}^\star, {x_{i \pm}}, \tau^\star}, \quad E_{i\pm}^\star =
\partial_{x} A(\varphi,x,\tau) |_{\varphi_{i \pm}^\star,, {x_{i \pm}},
\tau^\star},
\\
&& F_{i\pm}^\star = B(\varphi, x,\tau)|_{\varphi_{i \pm}^\star,
{x_{i \pm}}, \tau^\star}, \quad
\partial_x \varphi|_{i\pm}^\star = \partial_x \varphi(x,\tau)|_{{x_{i \pm}},
\tau^\star}\,.
\end{eqnarray*}
Using central spatial differences we obtain the following
numerical scheme for solving the general equation (\ref{eq_num}):
\begin{equation}
\varphi_i^{j+1}  = \frac{k}{h}(D_{i+}^\star \partial_x
\varphi|_{i+}^\star - D_{i-}^\star \partial_x \varphi|_{i-}^\star
+ E_{i+}^\star - E_{i-}^\star + F_{i+}^\star - F_{i-}^\star +
I_2^\star) + \varphi_i^j \label{eq:num}
\end{equation}
for $i=1,\cdots,n$, with approximation of the derivatives
\[
\partial _x \varphi|_{i+}^\star \approx
\frac{\varphi(x_{i+1},\tau^\star)-\varphi(x_i,\tau^\star)}{h},
\quad
\partial _x \varphi|_{i-}^\star \approx
\frac{\varphi(x_i,\tau^\star)-\varphi(x_{i-1},\tau^\star)}{h}\,.
\]
We shall pay our attention to the boundary values at $x_{0}$ and
$x_{n+1}$ later.

\smallskip
\noindent {\bf A simplified semi-implicit scheme.} To compute a solution at the new time layer $j+1$, we take the terms $D_{i\pm}^\star, E_{i\pm}^\star, F_{i\pm}^\star$ from the previous time layer with $\star=j$ and the term $\partial_x \varphi|_{i\pm}^\star$ from the new layer with $\star=j+1$. Reorganizing the new layer terms to the left-hand side and the old-layer terms to the right-hand side, we arrive at
\begin{eqnarray*}
-\frac{k}{h^2}D_{+} \varphi_{i+1}^{j+1}
&+& (1+\frac{k}{h^2}(D_{i+}^j+D_{i-}^j)) \varphi_i^{j+1} -
\frac{k}{h^2} D_{i-}^j \varphi_{i-1}^{j+1} \\
&=& \frac{k}{h^2}(I_2^j + E_{i+}^j- E_{i-}^j + F_{i+}^j - F_{i-}^j)
+ \varphi_i^j\,,
\end{eqnarray*}
which is a tridiagonal system which can be effectively solved by the Thomas algorithm.

\smallskip
\noindent {\bf An iterative fully implicit scheme.} We
take $^\star = j+1$ in all terms of (\ref{eq:num}) and
$\varphi_i^{j+1}$ will be computed iteratively as follows: we
denote $r_i^l$ the $l$-th iterative approximation of
$\varphi_i^{j+1}$, $i=1,\cdots,n$, starting with $r_i^0 :=
\varphi_i^j$. In each iterate we solve the tridiagonal system for $r_i^{l+1}$, $i=1,\cdots,n$,
with the nonlinear terms $I_2^{\star,l}, D_{i\pm}^{\star,l},
E_{i\pm}^{\star,l}, F_{i\pm}^{\star,l}$ evaluated at
$\tau^\star=\tau^{j+1}$ and $\varphi_i^{j+1} \approx r_i^{l}$. We
update $r_i^{l} := r_i^{l+1}$ until an accuracy criterion is met
and then we put $\varphi_i^{j+1}:=r_i^{l}$ from the last iterate.

\smallskip
\noindent {\bf Boundary conditions.} We consider two classes of boundary conditions: inhomogeneous Dirichlet, and mixed Robin type of homogeneous b.c.:
\begin{equation}
\begin{array}{ll}
\hbox{Dirichlet b.c.} & \varphi(x_L,t) = \varphi_L(t),
\varphi(x_R,t) = \varphi_R(t)\,,
\\
\hbox{Robin b.c.}     & \partial_x\varphi (x,t)
= d \varphi(x,t) \ \hbox{at}\ x=x_L, x_R\,,
\end{array}
\end{equation}
where the boundary functions $\varphi_L(t), \varphi_R(t)$ are
prescribed for the Dirichlet b.c., and $d \in\R$ is constant for
the Robin type of b.c. After discretization and using finite
differences, we obtain the discrete b.c.:
\[
\varphi_{0}^j = L \varphi_L^j + M \varphi_1^j,
\quad
\varphi_{n+1}^j = R \varphi_R^j + N \varphi_n^j\,,
\]
where $L=R=1, M=N=0$ for the case of Dirichlet b.c., and  $L=R=0, M=N=1/(1+d h)$ for the mixed Robin type of boundary conditions.

In our numerical approximation of the quasilinear parabolic equation (\ref{eq_PDEphi_1}) we use the following boundary conditions:
\begin{equation}
\partial_x \varphi(x,t) - \varphi(x,t) =0, \quad \text{at}\ x=x_L, \qquad 
\partial_x \varphi(x,t)  =0, \quad \text{at}\ x=x_R, 
\end{equation}
for all $t\in[0,T]$. The boundary condition at $x=x_L$ is based on the following reasoning: if $\varepsilon >0$ then, in the limit $x\to-\infty$, the dominant term in the equation 
$\partial_t \varphi + \partial_{x}^2 \alpha(\varphi) +
\partial_x
[ (\varepsilon  e^{-x}+r) \varphi + (1-\varphi) \alpha(\varphi)  ] =0$ is equal to $\partial_x
[ (\varepsilon  e^{-x}+r) \varphi(x,t) ]$. To balance this term one has to assume $\lim_{x\to-\infty} \partial_x(e^{-x} \varphi(x,t)) =0$. It means that $\lim_{x\to-\infty} \partial_x \varphi(x,t) - \varphi(x,t) =0$. The right boundary condition follows from the fact that, in the limit $x\to\infty$, equation (\ref{eq_PDEphi_1}) becomes $\partial_t \varphi + \partial_{x}^2 \alpha(\varphi) + \partial_x [ r\varphi + (1-\varphi) \alpha(\varphi)  ] =0$ having a constant solution and so $\lim_{x\to+\infty} \partial_x\varphi(x,t)=0$.

\subsection{Numerical benchmark to a traveling wave solution.}

We test the accuracy of the implicit scheme described above, using
the traveling wave analytical solution as described in Section
\ref{sect:TW} for the German DAX 30 Index and for $\varepsilon=0, r=0$.
We consider the time horizon $T=10$ and the computational domain
$[x_L, x_R] = [-4,4]$. In order to compute the semi-analytical
traveling wave solution $\varphi(x,t)$, we choose the limiting
values $v^-=0.3$, $v^+=1.5$. We solve equation
(\ref{ode-riskseekingX}) by means of the Merson method
(Runge-Kutta method of the 4th order) over the interval $[x_L, x_R
+c T]$. In the numerical scheme we use Dirichlet boundary
conditions on both ends, with values taken from the
semi-analytical traveling wave solution. For clarification, we
compute the function $\alpha(\varphi)$ numerically using the
Matlab function {\it quadprog}, with a very fine discretization
(of the order $10^{-5}$) of the considered domain of $\varphi$,
and so we consider it exact enough to substitute the exact
analytical solution. Having computed $\alpha(\varphi)$, we proceed
with solution of the quasi-linear PDE (\ref{eq_PDEphi_1}) by
means of the iterative implicit finite volume numerical scheme. As
the stopping criterion for the microiterates we choose the
$L_{\infty}$ norm of the difference of two consecutive iterates to
be less than tolerance $tol=10^{-9}$. We solve equation
(\ref{ode-riskseekingX}) using the embedded Matlab function {\it
ode45} with relative tolerance set to $10^{-8}$.

\begin{table}
\begin{center}
\small
\begin{tabular}{l | l | l|| l | l }
$h$ & $L_\infty((0,T): L_2)$-err & {$EOC_{k=0.1h}$} & $L_2((0,T):
W^1_2)$-err & $EOC_{k=0.1h}$
\\ \hline \hline 0.1 & 0.92313e-03& {-} & 1.19224e-03 & {-}
\\ 0.05 & 0.46046e-03 & 1.003
 & 0.68451e-03 & 0.801
 \\ 0.025 & 0.23194e-03 &
0.989
 & 0.38057e-03 & 0.847
\\
0.0125 & 0.11867e-03 & 0.967 & 0.20687e-03 & 0.879
 \\ 0.00625 & 0.06004e-03  &  0.983 &
0.11737e-03 & 0.818
\end{tabular}
\end{center}
\caption{%
The $L_\infty((0,T): L_2)$ and $L_2((0,T): W^1_2)$ norm of the error
of the numerical solution with the spatial step $h$ and time-space
step binding $k=0.1h$ and the exact traveling wave solution. The
experimental order of convergence.} \label{tab:EOC1}
\end{table}

\begin{table}
\begin{center}
\small
\begin{tabular}{l | l | l|| l | l }
$h$ & $L_\infty((0,T): L_2)$-err & {$EOC_{k=10h^2}$} & $L_2((0,T):
W^1_2)$-err & $EOC_{k=10h^2}$
\\ \hline \hline 0.1 & 9.47564e-03 & {-} & 14.51654e-03 & {-}
\\ 0.05 & 2.38427e-03 & 1.991
 & 3.84091e-03 & 1.918
 \\ 0.025 & 0.59656e-03 &
1.999
 & 0.98843e-03 & 1.958
\\
0.0125 & 0.14907e-03 & 2.001
 & 0.25677e-03 &
1.945
 \\ 0.00625 & 0.03725e-03  &  2.001
 &
0.08456e-03 & 1.602
\end{tabular}
\end{center}
\caption{
\small
The $L_\infty((0,T): L_2)$ and $L_2((0,T): W^1_2)$ norm of the error
of the numerical solution with the spatial step $h$ and time-space
step binding $k=10h^2$ and the exact traveling wave solution. The
experimental order of convergence.} \label{tab:EOC2}
\end{table}

\begin{figure}
\begin{center}
\includegraphics[width=0.35\textwidth]{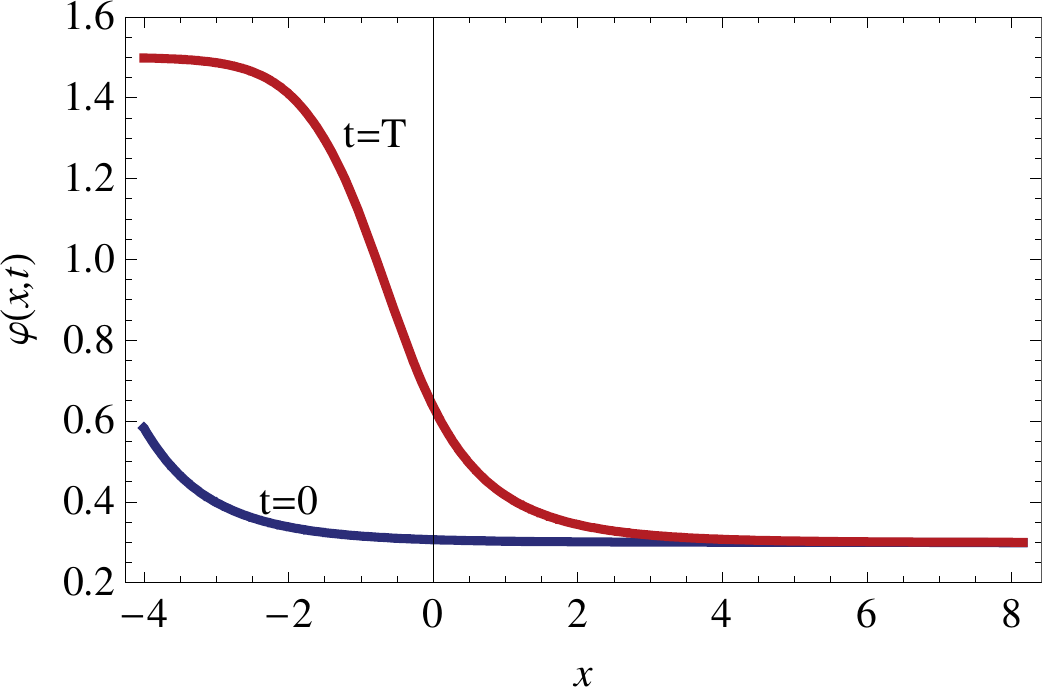}
\end{center}
\caption{
A traveling wave solution $\varphi(x,t)$ for $t=0$ and $t=T$.}
\label{fig:TWprofile}
\end{figure}

Tab.~\ref{tab:EOC1} indicates that the scheme is empirically of
the first order accurate in the $L_\infty((0,T): L_2)$ and
$L_2((0,T): W^1_2)$ norms when we restrict the time step $k$ by
$k=0.1h$. It is of the second order of convergence when $k=10h^2$,
see Tab.~\ref{tab:EOC2}. The so-called experimental order of
convergence (EOC) corresponds to the order $r>0$ of convergence
such that $err(h)=O(h^r)$ where $err(h)$ is the norm of the
difference of the numerical solution with the spatial step $h$ and
the exact traveling wave solution, i.e.
\[
r_i=\frac{\ln(err_{i}/err_{i-1})}{\ln(h_i/h_{i-1})}.
\]
Fig.~\ref{fig:TWprofile} depicts the analytical traveling wave
profile for times $t=0$ and $t=T$.

\section{Application to portfolio optimization}
\label{subsec:dax}

In this section we present an example in which our goal is to optimize a portfolio
consisting of $n=30$ assets of the German DAX 30 Index. The regular contribution to the portfolio is set to $\varepsilon=1$ and $r=0$. We consider the utility
function of the form
\begin{equation}
U(x)= - \frac{1}{a-1} \exp(-(a-1) x)\,, \label{eq:ara2}
\end{equation}
where we set the coefficient
of absolute risk aversion $a=9$. Notice that the constant absolute
risk aversion (CARA) utility function (\ref{eq:ara2}) corresponds
to the constant relative risk aversion (CRRA) function $\widetilde
U(y) = - \frac{1}{a-1} y^{-a+1}$ when expressed in the variable
$y= e^x$. We consider the finite time horizon $T=10$. Our
guess about the minimal and maximal possible values of $y$ is $y_L=0.01$ and $y_R=10$, respectively, so we consider $x
\in [x_L, x_R]$ where $x_L=\ln y_L, x_R=\ln y_R$. 
Discretization steps were chosen as
$h=0.1$ and $k=0.1 h^2$. Concerning boundary conditions, we use the
Robin b.c. with $d=1$ on the left boundary and the Neumann b.c on
the right boundary.
\begin{figure}
\begin{center}
\includegraphics[width=0.4\textwidth]{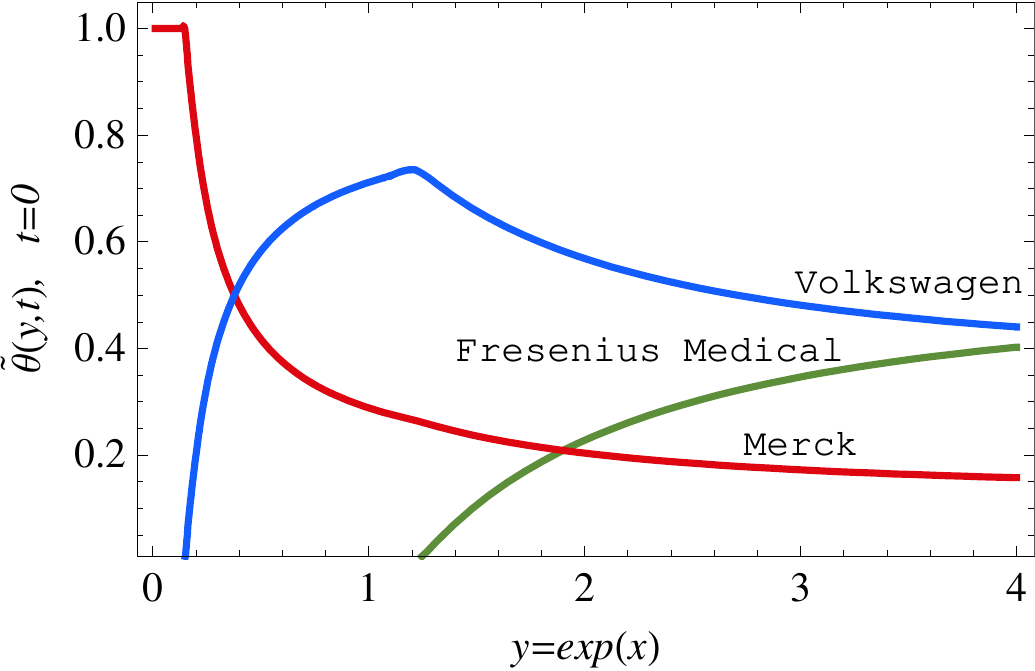}
\includegraphics[width=0.4\textwidth]{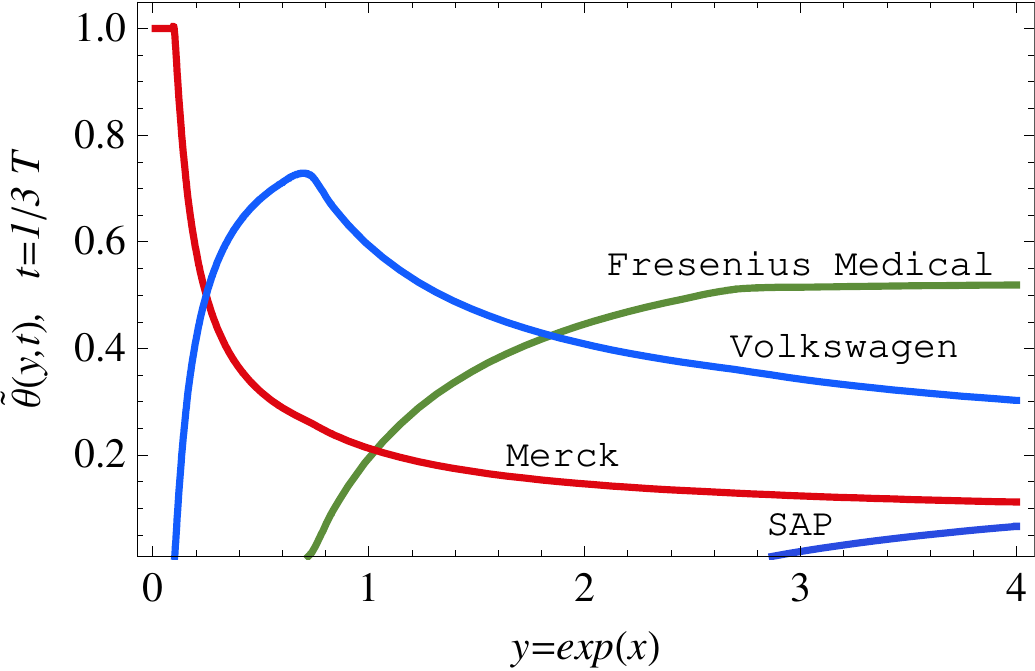}
\\
\includegraphics[width=0.4\textwidth]{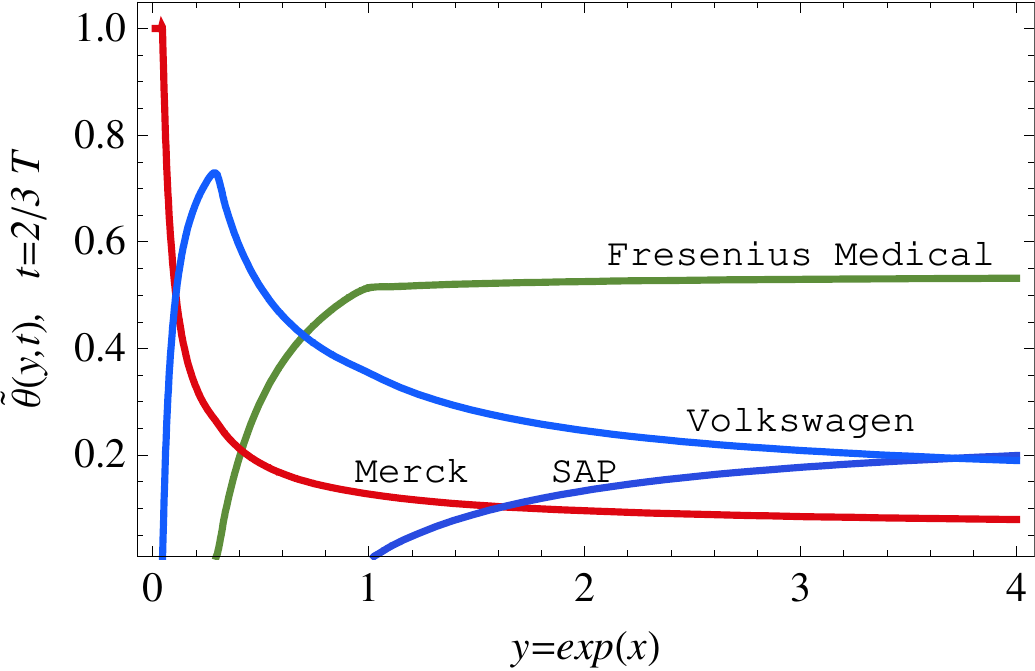}
\includegraphics[width=0.4\textwidth]{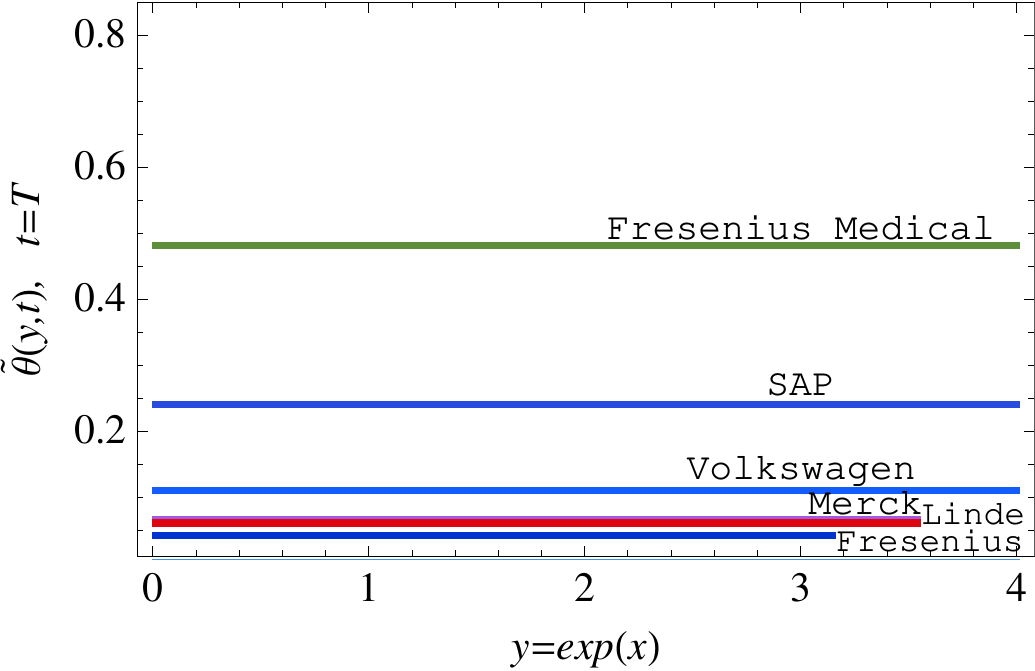}
\end{center}
\caption[caption]{Optimal response strategy $\tilde\bmtheta(y,t)$
for the DAX portfolio optimization, for time instances $t=0$,
$t=T/3$, $t= 2T/3$ and $t=T$ where $T=10$.}
\label{fig:exampleDAX}
\end{figure}

Fig.~\ref{fig:exampleDAX} shows that there are only a few relevant assets out
of the set of thirty assets entering the Index.
Tab.~\ref{tab:DAX5data} summarizes historical average returns and
covariance matrix for these assets. The figure reveals the highest
portion of Merck stocks for the early period of saving and for low
account values $y$. It is indeed reasonable to invest in an asset
with the highest expected return, although with the highest
volatility, when the account value is low, in early times of saving. Evident fast
decrement of the Merck weight can be observed for increasing
account value. Fresenius Medical has the lowest volatility out of
the considered five assets (and third lowest out of all thirty
assets) and third best mean return, which is reflected in its
major representation in the portfolio.

\begin{table}[bt]
\begin{center}
\small
\begin{tabular}{l || l | l | l | l | l |l || l }
$\bmSigma^{part}$ & Merck & VW & SAP & Fres Med & Linde & Fres &
Mean return
\\ \hline \hline
Merck & 1.6266 & -0.0155 & -0.0104 & -0.0146 & -0.0017 & -0.0033&
0.7315
\\ VW & -0.0155 & 0.1584 & 0.0345 & 0.0292 &
0.0569 & 0.0238 & 0.3413
\\ SAP & -0.0104 & 0.0345 & 0.0516
 & 0.0183 & 0.0240 & 0.0143&  0.1877
 \\ Fres Med & -0.0146 & 0.0292 &
0.0183
 & 0.0434 & 0.0227 & 0.0248 & 0.2202
\\ Linde &
-0.0017 & 0.0569 & 0.0240 & 0.0227 & 0.0530 & 0.0201 & 0.1932
\\ Fres &
-0.0033 & 0.0238 & 0.01430 & 0.0248 & 0.0201 & 0.0386 & 0.1351
\end{tabular}
\end{center}
\caption{
\small
The covariance matrix $\bmSigma^{part}$ and mean returns for six
stocks of the DAX 30 Index: Merck, Volkswagen, SAP, Fresenius
Medical, Linde, Fresenius. Based on historical data, August
2010--April 2012. Source: finance.yahoo.com} \label{tab:DAX5data}
\end{table}

In Section
\ref{ex:alpha_discontin} we showed that the sets of active indices
can be identified directly from the function
$\alpha^{\prime\prime}(\varphi)$. Moreover, based on Proposition
\ref{th:CompPsi}, there is an upper bound on investor's
coefficient of absolute risk aversion $a(x,t)$ given by $\varphi^+ -1$. When the utility function is
given as in (\ref{eq:ara2}), we have $\varphi^+ =a + 1=10$ and
so $\varphi(x,t)\le 10$ for all $x$ and $t$. Hence, only the
interval $[0,\varphi^+]$ gives relevant information for the
investor. Knowing the sets of active indices computed for
$\varphi\in[0,\varphi^+]$, the investor knows the set
$\bigcup_{\varphi\in (0,\varphi^+]} \{i\ |\  \hat{\theta}_i(\varphi)
>0\}$, i.e. the set of assets which will be entering the optimal
portfolio with a nonzero weight. To identify  the set $\{i\ |\ \hat{\theta}_i(\varphi) >0\}$ 
on a particular interval, it is
enough to calculate the optimal $\bmtheta(\varphi)$ in one single
point from the given interval.

\section*{Conclusions}
We proposed and analyzed a method of the Riccati transformation
for solving a class of Hamilton-Jacobi-Bellman equations arising
from a problem of optimal portfolio construction. We derived a
quasi-linear backward parabolic equation for the coefficient of
relative risk aversion corresponding to the value function - a
solution to the original HJB equation. Using Schauder's theory we
showed existence and uniqueness of classical H\"older smooth
solutions. We also derived useful qualitative properties of the
value function of the auxiliary parametric quadratic programming
problem after the transformation. A fully implicit iterative numerical scheme based on finite volume approximation has been proposed and numerically tested. 
We also provided a practical example of the German DAX 30
Index portfolio optimization.

\section*{Acknowledgments}
We are thankful to professor Milan Hamala  for stimulating discussions 
on parametric quadratic programming. This research was supported by the VEGA project 1/2429/12 (S.K.) and EU Grant Program FP7-PEOPLE-2012-ITN  STRIKE - Novel Methods in Computational Finance, No. 304617 (D.\v{S}.).


\end{document}